\definecolor{MyDarkBlue}{rgb}{0.15,0.25,0.45}
\let\fn\footnote
\renewcommand{\footnote}[1]{\linespread{1.1}\fn{#1}\linespread{1.29}}
\makeatletter\renewcommand{\section}{\@startsection
{section}{1}{\z@}{-3.5ex plus -1ex minus
    -.2ex}{2.3ex plus .2ex}{\bf }}
\makeatletter\renewcommand{\subsection}{\@startsection{subsection}{2}{\z@}{-3.25ex
plus -1ex minus
   -.2ex}{1.5ex plus .2ex}{\bf }}
\makeatletter\renewcommand{\subsubsection}{\@startsection{subsubsection}{3}{-2.45ex}{-3.25ex
plus -1ex minus -.2ex}{1.5ex plus .2ex}{\it }}
\renewcommand{\thesection}{\arabic{section}}
\renewcommand{\thesubsection}{\arabic{section}.\arabic{subsection}}
\renewcommand{\@seccntformat}[1]{\@nameuse{the#1}.~~}
\renewcommand{\theequation}{\thesection.\arabic{equation}}
\makeatletter \@addtoreset{equation}{section}
\newtheorem{thm}{Theorem}[section]
\renewcommand{\thethm}{\thesection.\arabic{thm}}
\newtheorem{lemma}[thm]{Lemma}
\newtheorem{definition}[thm]{Definition}
\newtheorem{theorem}[thm]{Theorem}
\newtheorem{proposition}[thm]{Proposition}
\newtheorem{corollary}[thm]{Corollary}
\renewcommand{\appendices}{
\section*{Appendix}\label{appendices}\setcounter{subsection}{0}
\addcontentsline{toc}{section}{Appendix}
\setcounter{equation}{0}
\makeatletter
\renewcommand{\theequation}{\Alph{subsection}.\arabic{equation}}
\renewcommand{\thesubsection}{\Alph{subsection}}
\renewcommand{\thethm}{\Alph{subsection}.\arabic{thm}}
\@addtoreset{equation}{subsection}
\@addtoreset{thm}{subsection}
\makeatother
}
\def\slasha#1{\setbox0=\hbox{$#1$}#1\hskip-\wd0\hbox to\wd0{\hss\sl/\/\hss}}
\def\periodb#1{\setbox0=\hbox{$#1$}#1\hskip-\wd0\hbox to\wd0{-}}
\newcommand{\unit}{\mathbbm{1}}   			% identity map/matrix
\newcommand{\id}{\mathrm{id}}   			% identity map/matrix
\newcommand{\CA}{\mathcal{A}}    			% cal-letters
\newcommand{\CC}{\mathcal{C}}
\newcommand{\CCD}{\mathscr{D}}
\newcommand{\CF}{\mathcal{F}}
\newcommand{\CH}{\mathcal{H}}
\newcommand{\frg}{\mathfrak{g}}				% frak-letters
\newcommand{\frh}{\mathfrak{h}}				% frak-letters
\newcommand{\frX}{\mathfrak{X}}
\newcommand{\frl}{\mathfrak{l}}
\newcommand{\FR}{\mathbbm{R}}     			% field of real numbers
\newcommand{\FC}{\mathbbm{C}}     			% field of complex numbers
\newcommand{\RZ}{\mathbbm{Z}}     			% ring of integers
\newcommand{\dd}{\mathrm{d}}     			% total differential
\newcommand{\dpar}{\partial}     			% partial differential
\newcommand{\embd}{{\hookrightarrow}}     		% embedded
\newcommand{\di}{\mathrm{i}}     			% imaginary unit
\newcommand{\sB}{\mathsf{B}}
\newcommand{\eand}{{\qquad\mbox{and}\qquad}}     		% and etc. in equations
\newcommand{\ewith}{{\qquad\mbox{with}\qquad}}
\newcommand{\efor}{{\qquad\mbox{for}\qquad}}
\newcommand{\der}[1]{\frac{\dpar}{\dpar #1}}   		% partielle ableitung, 1 argument
\newcommand{\sU}{\mathsf{U}}     			% groups
\newcommand{\sG}{\mathsf{G}}
\newcommand{\sL}{\mathsf{L}}
\newcommand{\sLie}{\mathsf{Lie}}
\newcommand{\sH}{\mathsf{H}}
\newcommand{\sGL}{\mathsf{GL}}
\newcommand{\sAut}{\mathsf{Aut}}
\newcommand{\acton}{\vartriangleright}     			% span
\def\tyng(#1){\hbox{\tiny$\yng(#1)$}}			% small Young diagram
\def\tyoung(#1){\hbox{\tiny$\young(#1)$}}			% small Young diagram
\newcommand{\beq}{\begin{eqnarray}}
\newcommand{\eeq}{\end{eqnarray}}
\newcommand{\sft}{{\sf t}}
\begin{document}

\begin{titlepage}
\begin{flushright}
 EMPG--14--12
\end{flushright}
\vskip 2.0cm
\begin{center}
{\LARGE \bf Higher Poincar\'e Lemma and Integrability}
\vskip 1.5cm
{\Large Getachew Alemu Demessie and Christian S\"amann}
\setcounter{footnote}{0}
\renewcommand{\thefootnote}{\arabic{thefootnote}}
\vskip 1cm
{\em Maxwell Institute for Mathematical Sciences\\
Department of Mathematics, Heriot-Watt University\\
Colin Maclaurin Building, Riccarton, Edinburgh EH14 4AS, U.K.}\\[0.5cm]
{Email: {\ttfamily gd132@hw.ac.uk~,~c.saemann@hw.ac.uk}}
\end{center}
\vskip 1.0cm
\begin{center}
{\bf Abstract}
\end{center}
\begin{quote}
We prove the non-abelian Poincar\'e lemma in higher gauge theory in two different ways. That is, we show that every flat local connective structure is gauge trivial. The first method uses a result by Jacobowitz which states solvability conditions for differential equations of a certain type. The second method extends a proof by T.\ Voronov and yields the explicit gauge parameters connecting a flat local connective structure to the trivial one. Finally, we show how higher flatness appears as a necessary integrability condition of a linear system which featured in recently developed twistor descriptions of higher gauge theories.
\end{quote}
\end{titlepage}

\section{Introduction and results}

Higher gauge theory \cite{Baez:2004in,Sati:0801.3480,Baez:2010ya} is an interesting generalization of ordinary gauge theory that describes consistently the parallel transport of extended objects. This requires the introduction of higher form potentials, and the usual no-go theorems concerning non-abelian higher form theories are circumvented by categorifying the mathematical structures underlying ordinary gauge theory.

The need to parallel transport extended objects arises e.g.\ in string and M-theory, where point particles are replaced by one-, two- and five-dimensional objects: the strings, the M2- and M5-branes. In particular, there is a superconformal field theory in six dimensions which can be regarded as an effective description of stacks of multiple M5-branes \cite{Witten:1995zh}. Because the interactions of M5-branes are mediated by M2-branes ending on them in so-called self-dual strings, the theory should also capture the parallel transport of these strings. This fits the fact that in the abelian case corresponding to a single M5-brane, its field content comprises a 2-form potential. Altogether, it is therefore reasonable to expect that this theory -- if it exists at the classical level -- is a higher gauge theory.

Many approaches towards constructing this six-dimensional superconformal field theory have been followed. Within the framework of higher gauge theory, the twistor constructions of \cite{Saemann:2012uq,Saemann:2013pca,Jurco:2014mva} seem particularly promising. Here, manifestly superconformal field equations are derived from a Penrose--Ward transform of holomorphic principal 2- and 3-bundles, which are holomorphic versions of non-abelian gerbes. There is in fact a one-to-one correspondence between gauge equivalence classes of solutions to the arising field equations and equivalence classes of the holomorphic principal 2- and 3-bundles. In proving this one-to-one correspondence, a higher Poincar\'e lemma enters, which says that flat connective structures are pure gauge. While it is unreasonable to assume that this statement is not true, we have not found it explicitly in the literature. In this paper, we provide two independent proofs of the higher Poincar\'e lemma, both for principal 2- and 3-bundles.

The difficulty in proving the higher Poincar\'e lemma is that one of the standard ways of showing the ordinary Poincar\'e lemma, the Frobenius theorem, cannot be readily extended beyond 1-form potentials. (Note that when speaking of the Poincar\'e lemma, we always refer to the statement that abelian or non-abelian flat connections are gauge equivalent to the trivial connection.) In particular, it does not seem to be clear what a higher generalization of the notion of foliation would be. We speculate about this in appendix \ref{app:HigherDistributions}, where we show how differential ideals are related to certain $L_\infty$-structures on multivector fields, but the picture remains incomplete. Fortunately, the reformulation of the Frobenius theorem as an equation in differential forms has a generalization due to Jacobowitz \cite{jacobowitz1978}. This generalization is sufficient to establish a first proof of the higher Poincar\'e lemma for principal 2- and 3-bundles. 

Another way of proving the Poincar\'e lemma has been recently followed by Voronov \cite{Voronov:0905.0287}. Here, the explicit gauge parameters connecting the flat connection to the trivial one are constructed from a Cauchy problem. We find nice generalizations of this proof to the case of principal 2- and 3-bundles. It should be noted that Voronov's proof holds for connections taking values in a Lie superalgebra or even in a $\RZ$-graded Lie algebra, see also \cite{Igusa:0912.0249} in this context. As differential graded algebras can be regarded as duals to higher Lie or $L_\infty$-algebras, Voronov's proof contains to some extent already a dual description of the Poincar\'e lemma for higher gauge theory based on $Q$-bundles and $Q$-groups. Our generalization of his proof, however, gives directly the picture in ordinary higher gauge theory.

Flat connections arise in twistor descriptions of  gauge field equations as solutions to linear systems of the form $(\dd+A)g=0$, where $g$ is a matrix group valued function and $A$ is a matrix Lie-algebra valued one-form. This linear system directly implies that $A=\dd g g^{-1}$ is pure gauge. Moreover, it can only have a solution if the curvature $F:=\dd A+\tfrac12[A,A]$ vanishes. The Frobenius theorem or, equivalently, the Poincar\'e lemma then states that this condition is in fact sufficient for the existence of a solution. It is interesting to see if and how these statements generalize to the higher case. As we show, the higher analogue of having a matrix group for crossed modules of Lie groups is to have an underlying $A_\infty$-algebra structure. If the products of this structure extend to the Lie groups, one can indeed write down a linear system containing a flat connective structure on a principal 2- or 3-bundle which implies that the connective structure is gauge equivalent to the trivial one and that the corresponding curvatures vanish. We expect that this observation has interesting applications in generalizing notions and structures from the theory of classical integrable systems to the higher setting.

This paper is organized as follows. In section \ref{sec:2}, we review Jacobowitz's theorem and use it to give a first proof of the higher Poincar\'e lemma. In section \ref{sec:3}, we show how Voronov's proof of the Poincar\'e lemma is extended to the higher situation. Finally, section \ref{sec:integrability} shows how higher flatness can be seen as a necessary integrability condition on a linear system. Appendix  \ref{app:HigherDistributions} contains some speculations relating $L_\infty$-structures on multivector fields to differential ideals.

\section{The Poincar\'e lemma for higher gauge theory}\label{sec:2}

As the Poincar\'e lemma is a local statement, we shall be merely interested in the local description of higher gauge theories. That is, we consider local connective structures on principal $n$-bundles, which are encoded in certain differential forms on an open contractible patches of a smooth manifold. We ignore all issues related to patching these local objects to global ones.

The local description of higher gauge theory is readily derived, cf.\ e.g.\ \cite{Jurco:2014mva}. Consider the tensor product of the differential graded algebra of differential forms $\Omega^\bullet(U)$ on a patch $U$ of a smooth manifold with a semistrict gauge Lie $n$-algebra in the form of an $n$-term strong homotopy Lie algebra. The result is another strong homotopy Lie algebra, whose Maurer--Cartan equations have solutions describing flat local connective structures on semistrict principal $n$-bundles over $U$. One can read off the definition of curvatures as well as the infinitesimal gauge transformations of the differential forms defining the connective structure. To derive the finite gauge transformations, however, one has to work a little harder.

We shall restrict our discussion to the case of principal 2- and 3-bundles with strict gauge 2- and 3-groups. It is hard to imagine that an analogous statement fails to hold in the semistrict case or for higher principal $n$-bundles, and a proof for these cases along similar lines to the ones below should exist. This, however, is not obvious. Moreover, one might want to use a different set up for such a proof, as e.g.\ encoding higher gauge groups in simplicial manifolds.

\subsection{A generalized Poincar\'e lemma}

The usual Poincar\'e lemma states that the equation $\dd \alpha=\beta$ involving some $p$- and $p+1$-forms $\alpha$ and $\beta$ can be solved in an open, contractible region if and only if $\dd \beta=0$. In \cite{jacobowitz1978}, Jacobowitz presented a generalization of this statement which we briefly review below. The precise definition of having local solutions is as follows.
\begin{definition}\label{def:solvable}
 We say that the equation $\dd \omega=\Psi_{p+1}(x,\omega)$ for a $p$-form $\omega$ is \uline{solvable in a region $D$}, if for each $x\in D$ and for each $\omega_0\in \wedge^p T^{*}M|_x$, there is an open neighborhood $U_x\subset D$ and an $\omega\in \Omega^p(U_x)$ such that $\dd \omega=\Psi_{p+1}(x,\omega)$ and $\omega|_x=\omega_0$.
\end{definition}
\noindent The generalized Poincar\'e lemma reads then as follows.
\begin{proposition}\label{prop:Poincare-lemma}
 The equation $\dd \omega=\Psi_{p+1}(x,\omega)$ is solvable in a region $D$, if for all $x\in D$ there is a neighborhood $U_x$ such that for all $\omega_0\in \Omega^p(U_x)$ with $\dd \omega_0=\Psi_{p+1}(\omega_0)$ at $x$, we have $\dd \Psi_{p+1}(\omega_0)=0$ at $x$. This statement generalizes to systems of such equations with forms $\omega$ of varying degree.
\end{proposition}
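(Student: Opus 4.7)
The plan is to localize the problem, convert the PDE into a fixed-point equation using the classical Poincar\'e homotopy operator, solve it by contraction, and then deduce the original equation from the integrability hypothesis together with the smallness of the homotopy operator.

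Since the conclusion is local I would fix $x \in D$ and work in a star-shaped coordinate neighborhood $U_x \subset D$ centered at $x$. On such a domain the Poincar\'e homotopy operator $K : \Omega^{q+1}(U_x) \to \Omega^q(U_x)$, defined by $K\alpha = \int_0^1 t^{q}\, \iota_R \phi_t^*\alpha\, \dd t$ with $\phi_t(y) = ty$ and $R = y^i \dpar_{y^i}$, satisfies $\dd K + K\dd = \id$ on positive-degree forms and vanishes at the center, with the quantitative bound $\|K\alpha\|_{C^0(U_x)} \lesssim \mathrm{diam}(U_x)\,\|\alpha\|_{C^0(U_x)}$. I would then recast the equation as the fixed-point problem $\omega = \tilde\omega_0 + K\bigl(\Psi_{p+1}(\omega)\bigr)$, where $\tilde\omega_0$ is any smooth extension of the prescribed datum $\omega_0 \in \wedge^p T^*_x M$ to $U_x$. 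Provided $\Psi_{p+1}$ depends Lipschitz-continuously on its form slot, shrinking $U_x$ makes the right-hand side a contraction on a closed $C^0$-ball around $\tilde\omega_0$, and the Banach fixed-point theorem produces a unique $\omega$ with $\omega|_x = \omega_0$.

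The heart of the proof is to show that this $\omega$ actually solves $\dd\omega = \Psi_{p+1}(\omega)$. Differentiating the fixed-point identity gives
\begin{equation}
\rho \;:=\; \dd\omega - \Psi_{p+1}(\omega) \;=\; -\,K\bigl(\dd\Psi_{p+1}(\omega)\bigr).
\end{equation}
By the chain rule $\dd\Psi_{p+1}(\omega)$ is algebraic in the 1-jet of $\omega$, in particular linear in $\dd\omega$. Substituting $\dd\omega = \Psi_{p+1}(\omega) + \rho$ and splitting accordingly yields a decomposition $\dd\Psi_{p+1}(\omega) = E(\omega) + B(\omega)\rho$, where $E(\omega)(y)$ coincides with $\dd\Psi_{p+1}(\tilde\omega)(y)$ for any $\tilde\omega$ whose 1-jet at $y$ satisfies $\dd\tilde\omega(y) = \Psi_{p+1}(\tilde\omega)(y)$. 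The pointwise integrability hypothesis forces $E(\omega)\equiv 0$ on $U_x$, so $\rho = -K\bigl(B(\omega)\rho\bigr)$; the smallness of $K$ then yields $\|\rho\|_{C^0} \leq C\,\mathrm{diam}(U_x)\,\|\rho\|_{C^0}$, forcing $\rho \equiv 0$ after a final shrinking.

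Systems of equations with forms of varying degree are handled identically by packing the unknowns as a single inhomogeneous form valued in a graded vector space and applying $K$ componentwise; the compatibility conditions collapse to the same pointwise integrability hypothesis. The main obstacle I anticipate is the algebraic decomposition in the previous paragraph: one must verify that the pointwise integrability hypothesis genuinely annihilates the non-$\rho$ piece $E(\omega)$ and that the resulting operator $B(\omega)$ has the regularity needed for the smallness estimate -- this is precisely the subtle analytic content of Jacobowitz's argument in \cite{jacobowitz1978}.
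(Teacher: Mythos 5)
The paper itself offers no proof of this proposition: it is quoted from Jacobowitz, with only the remark that his argument generalizes the standard proof of the Frobenius theorem. Your homotopy-operator route is therefore genuinely different, and its central idea is sound: $\dd\Psi_{p+1}(\omega)|_y$ is affine in the first derivatives of $\omega$, the hypothesis forces it to vanish on the affine subspace of $1$-jets on which $\rho=\dd\omega-\Psi_{p+1}(\omega)$ vanishes, so the defect factors as $B(y,\omega(y))\rho$ with $B$ obtained by composing the linear part with a fixed right inverse of antisymmetrization; the smallness of $K$ then kills $\rho$. This would give a more self-contained presentation than deferring to \cite{jacobowitz1978}.

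Two steps, however, have genuine gaps. First, differentiating the fixed-point identity actually gives $\rho=\dd\tilde\omega_0-K\bigl(\dd\Psi_{p+1}(\omega)\bigr)$, not $\rho=-K\bigl(\dd\Psi_{p+1}(\omega)\bigr)$; you must choose $\tilde\omega_0$ closed (for instance the constant-coefficient extension of the prescribed value $\omega_0$ in the chosen chart), since the extra term is not controlled by $\mathrm{diam}(U_x)$. Second, and more seriously, a contraction on a closed $C^0$-ball produces only a continuous fixed point: $K$ does not gain a derivative, so $\omega=\tilde\omega_0+K\bigl(\Psi_{p+1}(\omega)\bigr)$ yields no regularity beyond that of $\omega$ itself, and then $\dd\omega$, $\rho$ and $\dd\Psi_{p+1}(\omega)$ in the second half of your argument are undefined (the identity $\dd K+K\dd=\id$ also presupposes a $C^1$ argument). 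You need to run the Picard iteration in $C^1$, where the estimate is more delicate: differentiating $K\alpha$ produces a term in which the factor $|y|\le\mathrm{diam}(U_x)$ coming from $\iota_R$ is lost, so one only gets $\|K\alpha\|_{C^1}\lesssim\|\alpha\|_{C^0}+\mathrm{diam}(U_x)\,\|\alpha\|_{C^1}$ and the map is not a plain contraction in $C^1$. The argument can still be closed — the uncontrolled term is of lower order, the $C^0$ differences of the iterates already decay geometrically, and the $C^1$ differences then satisfy a recursion $d_n\le\theta d_{n-1}+Ce_n$ with $\theta<1$ — but this bootstrap is precisely the analytic content you have omitted, and without it the proof does not stand as written.
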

\noindent The proof found in \cite{jacobowitz1978} is a generalization of the usual proof of the Frobenius theorem.

Recall that the ordinary Frobenius theorem states that an involutive distribution $\CCD$ on a manifold $M$ (i.e.\ a smoothly varying family of subspaces of the tangent bundle, on whose sections the Lie bracket of vector fields closes) corresponds to a regular foliation of $M$ by submanifolds $N$. In modern language, the distribution is the annihilator of a differential ideal generated by 1-forms. Such a differential ideal comes with integral submanifolds. That is, for each point $p\in M$, we have an embedding $e:N_p\embd M$ such that $p\in N_p$ and  $e^*\alpha=0$ for any form $\alpha$ in the differential ideal. These integral submanifolds correspond to the leaves of the foliation of $M$.

It does not seem to be completely clear how to generalize this picture to higher forms. The equation $\dd \omega=\Psi_{p+1}(x,\omega)$ is certainly again encoded in a differential ideal which, however, is no longer generated exclusively by 1-forms. Such an ideal forms an exterior differential system, which admits integral submanifolds if and only if Cartan's test is passed, cf.\ \cite{MR1083148}. One issue with Cartan's test is that it does not work in the smooth, but only in the real analytic category. In appendix \ref{app:HigherDistributions}, we present some partial generalization of the notion of distribution, which amounts to a differential ideal. The conditions of Cartan's test, however, do not seem to have a clear interpretation in the context of generalized distributions.

\subsection{Local flat connective structures on principal 2-bundles}

A principal 2-bundle is essentially the non-abelian generalization of a gerbe, see \cite{Breen:math0106083,Aschieri:2003mw,Bartels:2004aa}. Connections on principal 2-bundles were discussed in detail in \cite{Baez:2004in}. Here, we will only need the local description over an open, contractible patch $U$ of a smooth manifold $M$ and the only non-trivial data will be the local connective structure over the patch $U$.

Principal 2-bundles come with a structure Lie 2-group. The most general Lie 2-groups are notoriously difficult to handle, and we therefore restrict our attention in this paper to strict such 2-groups. These are well-known to be equivalent to crossed modules of Lie groups, cf.\ \cite{Baez:0307200}.
\begin{definition}\label{def:crossed_module}
A \uline{crossed module of Lie groups} $(\sH\xrightarrow{~\sft~}\sG,\acton)$ is a pair of Lie groups $\sG$ and $\sH$ together with a group homomorphism $\sft:\sH\rightarrow \sG$ and an action by automorphism $\acton$ of $\sG$ on $\sH$. The group homomorphism and the action satisfy the following compatibility conditions for all $g\in\sG$ and $h,h_{1,2}\in\sH$:
\begin{equation}
  \sft(g\acton h)\ =\ g \sft(h) g^{-1}\eand \sft(h_1)\acton h_2\ =\ h_1h_2h_1^{-1}~.
\end{equation}
The first condition guarantees equivariance with respect to conjugation, while the second condition is the \uline{Peiffer identity}.
\end{definition}
Applying the tangent functor to a crossed module of Lie groups, we obtain the following.
\begin{definition}
 A \uline{crossed module of Lie algebras} $(\frh\xrightarrow{~\sft~}\frg,\acton)$ is a pair of Lie algebras $\frg$ and $\frh$ together with a Lie algebra homomorphism $\sft:\frh\rightarrow \frg$ and an action by derivation $\acton$ of $\frg$ on $\frh$. The compatibility conditions here read as:
\begin{equation}
 \sft(\gamma\acton \chi)\ =\ [\gamma,\sft(\chi)]\eand \sft(\chi_1)\acton \chi_2\ =\ [\chi_1,\chi_2]
\end{equation}
for all $\gamma\in \frg$ and $\chi, \chi_{1,2}\in \frh$.
\end{definition}

The standard example of a crossed module of Lie groups is the automorphism 2-group $(\sG\xrightarrow{~\sft~}\sAut(\sG),\acton)$ of a Lie group $\sG$, where $\sft$ is the embedding by the adjoint action and $\acton$ is the automorphism action. Another example is the delooping $\sB\sU(1):=(\sU(1)\xrightarrow{~\sft~}*,\acton)$ of $\sU(1)$, where $*=\{\unit\}$ is the trivial group  and $\sft$ and $\acton$ are trivial.

Instead of delving into the general definition of principal 2-bundles, we merely need the local description of their connective structures.
\begin{definition}
 Given an open, contractible patch $U$ of a smooth manifold $M$, a \uline{local connective structure} over $U$ of a principal 2-bundle with structure crossed module $(\sH\xrightarrow{~\sft~}\sG,\acton)$ is given by a $\sLie(\sG)$-valued 1-form $A$ together with a $\sLie(\sH)$-valued 2-form $B$ over $U$. The corresponding \uline{curvatures} read as 
 \begin{equation}
  \CF:=\dd A+\tfrac{1}{2}[A,A]-\sft(B)\eand H:=\dd B+A\acton B~.
 \end{equation}
 An equivalence relation on local connective structures is given by \uline{gauge transformations}, which are parameterized by a $\sG$-valued function $g$ together with a $\sLie(\sH)$-valued 1-form $\Lambda$ as follows:
  \begin{equation}\label{eq:gauge-trafos-2}
    \begin{aligned}
    A\ &\mapsto\ \tilde{A}\ :=\ g^{-1} A g+g^{-1} \dd g-\sft(\Lambda)~,\\
    B\ &\mapsto\ \tilde{B}\ :=\ g^{-1}\acton B -\dd \Lambda-\tilde{A}\acton\Lambda-\tfrac12[\Lambda,\Lambda]~,\\
    \CF\ &\mapsto\ \tilde{\CF}\ \,:=\ g^{-1} \CF g~,\\
    H\ &\mapsto\ \tilde{H}\ :=\ g^{-1}\acton H-\tilde \CF\acton \Lambda~.
    \end{aligned}
  \end{equation}
\end{definition}
If a connective structure is to describe a consistent parallel transport of a 1-dimensional object along a surface, the curvature, also called ``fake curvature", $\CF$ has to vanish. Note that the equation $\CF=0$ is invariant under gauge transformations \eqref{eq:gauge-trafos-2}.
\begin{definition}
 We call a local connective structure $(A,B)$ \uline{flat}, if $\CF=0$ and $H=0$.
\end{definition}
\noindent Note that a flat connective structure remains flat under gauge transformations \eqref{eq:gauge-trafos-2}.

We now have the following statement about flat connective structures:
\begin{theorem}\label{thm:Poincare-2}
For any flat local connective structure $(A,B)$ on a patch $U$ and any point $p\in U$, there is a neighborhood $U_p$ of $p$ such that $(A,B)$ is pure gauge.
 %In a neighborhood $U_p$ of any point $p$, a flat local connective structure $(A,B)$ on a principal 2-bundle is pure gauge. 
 That is, it can be written as
 \begin{equation}\label{eq:pure-gauge-2}
  \begin{aligned}
    A\ & =\ g^{-1} \dd g-\sft(\Lambda)~,\\
    B\ & =\ -\dd \Lambda-A\acton\Lambda-\tfrac12[\Lambda,\Lambda]~
  \end{aligned}
 \end{equation}
 for some $\sG$-valued function $g$ and $\frh$-valued 1-form $\Lambda$ on $U_p\subset U$.
\end{theorem}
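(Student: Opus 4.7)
\emph{Plan.} My strategy is to recast \eqref{eq:pure-gauge-2} as a first-order PDE system for the unknowns $(g,\Lambda)$ and invoke Jacobowitz's criterion, Proposition~\ref{prop:Poincare-lemma}. Unlike in ordinary gauge theory one cannot first trivialise $A$ via the non-abelian Poincar\'e lemma, because the fake-curvature equation $\dd A+\tfrac{1}{2}[A,A]=\sft(B)$ does not guarantee that $A$ on its own is a flat connection; we therefore have to solve for $g$ and $\Lambda$ simultaneously.

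Rearranging \eqref{eq:pure-gauge-2}, the system takes the mixed-degree form
\begin{equation*}
  g^{-1}\dd g\ =\ A+\sft(\Lambda)~,\qquad \dd\Lambda\ =\ -B-A\acton\Lambda-\tfrac{1}{2}[\Lambda,\Lambda]~,
\end{equation*}
where $g$ is a $\sG$-valued $0$-form and $\Lambda$ is an $\frh$-valued $1$-form on some neighborhood $U_p$ of $p$, with initial data $g(p)=\unit$ and $\Lambda|_p=0$. Proposition~\ref{prop:Poincare-lemma} is explicitly stated to accommodate such systems in forms of varying degree, so it applies.

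The main computation is to verify Jacobowitz's integrability hypothesis for each equation at $p$. For the first equation, the Maurer--Cartan obstruction is $\dd\omega+\tfrac{1}{2}[\omega,\omega]=0$ with $\omega:=A+\sft(\Lambda)$. Expanding and invoking the two Peiffer identities in the form $[A,\sft(\Lambda)]=\sft(A\acton\Lambda)$ and $[\sft(\Lambda),\sft(\Lambda)]=\sft([\Lambda,\Lambda])$ rewrites this obstruction as $\CF+\sft\bigl(\dd\Lambda+A\acton\Lambda+\tfrac{1}{2}[\Lambda,\Lambda]\bigr)$, which collapses to $\CF=0$ upon substituting the second equation. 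For the second equation, applying $\dd$ and using $\dd(A\acton\Lambda)=\dd A\acton\Lambda-A\acton\dd\Lambda$, $\dd[\Lambda,\Lambda]=2[\dd\Lambda,\Lambda]$, the Jacobi identity, and back-substituting both equations themselves, I expect the obstruction to reorganise into a combination of $\CF$ and $H$ and therefore to vanish by the flatness assumption.

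Once both integrability hypotheses are checked, Proposition~\ref{prop:Poincare-lemma} furnishes a solution $(g,\Lambda)$ on some neighborhood $U_p$ of $p$, and this solution is by construction the gauge trivialisation asserted by \eqref{eq:pure-gauge-2}. I expect the main obstacle to be the sign bookkeeping in the second integrability check: the graded brackets on $\frh$-valued forms and the action $\acton$ on such forms generate numerous cancellations that must be organised carefully, and one must exploit both Peiffer identities together with the equivariance of $\acton$ under $\sft$ in order to isolate the pieces of $\CF$ and $H$ among the resulting terms.
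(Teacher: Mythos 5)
Your proposal follows essentially the same route as the paper: both recast \eqref{eq:pure-gauge-2} as a mixed-degree system $\dd\omega=\Psi(x,\omega)$ and verify Jacobowitz's hypothesis (Proposition~\ref{prop:Poincare-lemma}) at a point using the flatness equations, the equivariance of $\sft$ and the Peiffer identity. The only differences are cosmetic --- the paper takes $g^{-1}$ rather than $g$ as the unknown $0$-form and differentiates $\Psi_{1,2}$ directly, and it writes out in full the second integrability check that you only sketch --- but your Maurer--Cartan computation for the first equation is correct and the second check goes through exactly as you predict.
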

\begin{proof}
 For simplicity, we assume that $\sG$ and $\sH$ are matrix groups. The proof is, however, readily extended to the general case. We can rewrite equations \eqref{eq:pure-gauge-2} as 
 \begin{equation}\label{eq:pure-gauge-2aa}
 \begin{aligned}
  \dd g^{-1}&=-A g^{-1}-\sft(\Lambda)g^{-1}=:\Psi_1(g,\Lambda)~,\\
  \dd \Lambda&=-B-A\acton\Lambda-\tfrac12[\Lambda,\Lambda]=:\Psi_2(g,\Lambda)~.
 \end{aligned}
 \end{equation}
 We regard \eqref{eq:pure-gauge-2aa} as a system of equations of the form $\dd \omega = \Psi_{p+1}(\omega,x)$ with $\dim(\sG)$ $0$-forms and $\dim(\sLie(\sH))$ $1$-forms. To apply proposition \ref{prop:Poincare-lemma}, we merely have to show that $\dd \Psi_1(g_0,\Lambda_0)=0$ and $\dd \Psi_2(g_0,\Lambda_0)=0$ at any $x\in U$ if $\CF=H=0$ as well as $\dd g_0^{-1}=\Psi_1(g_0,\Lambda_0)$ and $\dd \Lambda_0=\Psi_2(g_0,\Lambda_0)$ at $x$. We compute
 \begin{subequations}
 \begin{equation}
 \begin{aligned}
  \left.\dd \Psi_1(g_0,\Lambda_0)\right|_{x}&=\left.\left(-\dd A g^{-1}_0+A\wedge \dd g^{-1}_0-\sft(\dd \Lambda_0)g^{-1}_0+\sft(\Lambda_0)\wedge \dd g^{-1}_0\right)\right|_{x}\\
  &=\left(A\wedge A g^{-1}_0-\sft(B)g^{-1}_0+(A+\sft(\Lambda_0))\wedge \Psi_1(g_0,\Lambda_0)-\sft(\Psi_2(g_0,\Lambda_0))g^{-1}_0)\right|_{x}\\
  &=0 
\end{aligned}
\end{equation}
and
\begin{equation}
\begin{aligned}
  \left.\dd \Psi_2(g_0,\Lambda_0)\right|_{x}&=\left.\left(-\dd B-\dd A\acton \Lambda_0+ A\acton \dd \Lambda_0-[\dd \Lambda_0,\Lambda_0]\right)\right|_{x}\\
  &=\left.\left(A\acton B+(A\wedge A-\sft(B))\acton \Lambda_0+(A+\sft(\Lambda_0))\acton\Psi_2(g_0,\Lambda_0)\right)\right|_{x}\\
  &=0~.
 \end{aligned}
 \end{equation}
 \end{subequations}
 Therefore by proposition \ref{prop:Poincare-lemma}, equations \eqref{eq:pure-gauge-2aa}, and thus also \eqref{eq:pure-gauge-2}, are solvable on $U$. According to definition \ref{def:solvable}, this means that there is a solution in a neighborhood $U_p\subset U$ of each point $p\in U$.
\end{proof}

\subsection{Local flat connective structures on principal 3-bundles}

In this section we extend the result of the previous section to local connective structures on principal $3$-bundles. Principal 3-bundles are one step further in the categorification of principal bundles, and form non-abelian generalizations of 2-gerbes. The full description of principal 3-bundles with connective structure is found in \cite{Saemann:2013pca}, see also \cite{Martins:2009aa,Jurco:2009px} for partial earlier accounts.

Principal $3$-bundles use Lie $3$-groups as structure 3-groups, and we shall restrict ourselves to semistrict 3-groups for simplicity. Just as strict Lie 2-groups are categorically equivalent to crossed modules of Lie 2-groups, semistrict Lie 3-groups are equivalent to 2-crossed modules of Lie groups. We therefore start by recalling the latter notion \cite{Conduche:1984:155}.

\begin{definition}\label{def:2-crossed_module}
  A \uline{$2$-crossed module of Lie groups} is a normal complex of Lie groups (i.e.\ a complex of Lie groups in which each image of $\sft$ is a normal subgroup of the next group)
  \begin{equation}\label{eq:2-crossed-sequence}
  \sL \xrightarrow{~~\sft~~}\sH\xrightarrow{~~\sft~~}\sG~, 
  \end{equation}
  together with an action, $\acton$,  of $\sG$ on $\sH$ and $\sL$ by automorphism  as well as a $\sG$-equivariant binary map  $\{\cdot, \cdot\}: \sH\times \sH\longrightarrow \sL$ satisfying the following  conditions.
  For all $h, h_{1,2,3} \in H$, $g\in \sG$ and $\ell, \ell_{1,2} \in \sL$, we have
  \begin{itemize}
  \item[(i)] $\sft\left(g\acton \ell \right)= g\acton\sft(\ell)$ ~and~ $\sft(g\acton h)=g\sft(h)g^{-1}$,
  \item[(ii)] $\sft \left(\{h_1, h_2\} \right)= \left(h_1h_2h^{-1}_{1} \right)\left( \sft(h_1) \acton h_2^{-1}\right)$,
  \item[(iii)] $\{\sft(\ell_1), \sft(\ell_2) \}= \ell_1\ell_2\ell^{-1}_{1}\ell^{-1}_{2}:=[\ell_1, \ell_2]$,
  \item[(iv)]  $\{h_1h_2, h_3\}= \{ h_1, h_2h_3h_2^{-1}\}\left( \sft(h_1)\acton \{ h_2, h_3\}\right)$,
  \item[(v)] $ \{h_1, h_2h_3 \}=\{ h_1,h_2 \} \{h_1,h_3 \} \{ \sft\left(\{h_1, h_3\}\right)^{-1}, \sft(h_1)\acton h_2 \}$,
  \item[(vi)] $\{h, \sft(\ell) \}=\left( \{\sft(\ell), h \}\right)^{-1}\ell\left( \sft(h)\acton \ell^{-1}\right)$.
  \end{itemize}
\end{definition}
The map $\{\cdot, \cdot\}$ is  called the Peiffer lifting and measures the failure  of $(\sH\xrightarrow{t}\sG, \acton)$ to be a crossed module. Sometimes we use $\sL \xrightarrow{}\sH\xrightarrow{}\sG $  to denote $2$-crossed modules.  Lie $2$-crossed modules are generalizations of Lie crossed modules. In particular, we can obtain Lie crossed modules from Lie $2$-crossed modules by taking $\sL$ to be the trivial Lie group. Moreover, the Lie 2-crossed module $(\sL \xrightarrow{~\sft~}\sH, \acton )$ together with the induced action 
\begin{equation}
 h \acton \ell := \ell \{ \sft(\ell)^{-1}, h\}  
\end{equation}
for all $h\in \sH$ and $\ell \in \sL$ also forms a Lie crossed module.

Applying the tangent functor to the normal sequence \eqref{eq:2-crossed-sequence}, we obtain the axioms for $2$-crossed modules of Lie algebras. 
\begin{definition}
Let $\left( \frl,\frh, \frg \right)$ be a triple of Lie algebras. A \uline{2-crossed module of Lie algebras} (or a differential Lie $2$-crossed module) is a normal complex\footnote{i.e.\ a complex in which the image of each term is an ideal of the next} of Lie algebras
\begin{equation}
 \frl \xrightarrow{~~\sft~~}\frh \xrightarrow{~~\sft~~}\frg~,
\end{equation}
together with actions $\acton$ of $\frg$ on $\frl$ and $\frh$ by derivation  as well as a $\frg$-equivariant bilinear map,  $\{\cdot, \cdot\}: \frh \times \frh \longrightarrow \frl$ satisfying the conditions
\begin{itemize}
\item[(i)] $\sft(\gamma \acton \lambda)= \gamma \acton \sft( \lambda)  $ and $\sft(\gamma \acton \chi) = [\gamma, \sft(\chi) ]$,
\item[(ii)] $ \sft\left( \{\chi_1, \chi_2 \}\right)= [\chi_1, \chi_2]~-~\sft(\chi_1)\acton \chi_2$,
\item[(iii)] $ \{ \sft(\lambda_1), \sft(\lambda_2)\}=[\lambda_1, \lambda_2]$,
\item[(iv)] $ \{[\chi_1, \chi_2], \chi_3\}=\sft(\chi_1)\acton \{\chi_2, \chi_3\}~+~ \{\chi_1, [\chi_2, \chi_3]\}~-~\sft(\chi_2)\acton \{\chi_1, \chi_3\}~-~\{\chi_2, [\chi_1, \chi_3] \}$,
\item[(v)] $ \{\chi_1, [\chi_2, \chi_2]\}=\{\sft\left(\{\chi_1, \chi_2\} \right), \chi_3\}~-~\{\sft\left(\{\chi_1, \chi_3\} \right),\chi_2\}$,
\item[(vi)] $ -\{\sft(\lambda), \chi\} = \{\chi, \sft(\lambda)\}~+~ \sft(\chi) \acton \lambda$,
\end{itemize}
for every $\gamma \in \frg$, $\chi, \chi_{1,2,3}\in \frh$, and $\lambda, \lambda_{1,2} \in \frl$.
\end{definition}

Note that a Lie 2-crossed module of Lie groups can be partially linearized to obtain more general actions, as e.g.\ the action of $\sG$ onto $\frh$. More details on 2-crossed modules can be found in \cite{Martins:2009aa,Saemann:2013pca}.

The local description of a connective structure on a principal 3-bundle is now readily given, cf.\ \cite{Saemann:2013pca}.
\begin{definition}
Let $U$ be a contractible patch of a smooth manifold $M$. A local connective structure over $U$ of a principal $3$-bundle with structure $2$-crossed module $\Big(\sL \xrightarrow{~} \sH \xrightarrow{~} \sG,$ $\acton, \{\cdot, \cdot \}\Big)$ can be expressed as a triple of Lie algebra valued forms $(A, B, C)$, where $A\in \Omega^1\left(U, \sLie(\sG) \right)$, $B \in \Omega^2\left(U, \sLie(\sH) \right)$ and $C \in \Omega^3\left(U, \sLie(\sL) \right)$.  Corresponding curvatures are defined according to
\begin{equation}
\CF:= \dd A + \frac{1}{2}[A, A]-\sft( B)~,~~\CH:= \dd B+A\acton B -\sft(C)~, ~~G:= \dd C+ A\acton C + \{B, B\}~. 
\end{equation}
Gauge transformations act on the Lie algebra valued forms according to
%\begin{subequations}\label{eq:gauge-trafos-3}
\begin{equation}\label{eq:gauge-trafos-3}
  \begin{aligned}
  \ A\ &\mapsto\ \tilde{A}\ :=\ g^{-1} A g+g^{-1} \dd g-\sft(\Lambda)~,\\
  \  B\ &\mapsto\ \tilde{B}\ :=\ g^{-1}\acton B -(\dd +\tilde{A}\acton)\Lambda -\tfrac{1}{2} \sft(\Lambda)\acton \Lambda -\sft( \Sigma)~,\\
   \ C\ & \mapsto\ \tilde{C}:=\ g^{-1}\acton C - \left( (\dd + \tilde{A}\acton )+ \sft(\Lambda)\acton \right)\Sigma +\{\tilde{B}+ \tfrac{1}{2}(\dd + \tilde{A}\acton )\Lambda + \tfrac{1}{2}[\Lambda, \Lambda], ~\Lambda\} +\\ &~~~~~~~~~~~~ ~~~~~~~~~~~+ \{\Lambda,  \tilde{B}- \tfrac{1}{2}(\dd + \tilde{A}\acton )\Lambda- \tfrac{1}{2}[\Lambda,\Lambda] \}~,\\
  % \end{aligned}
   %
   %\begin{equation}
   %\begin{aligned}
  \  \CF\ &\mapsto\ \tilde{\CF}\ :=\ g^{-1} \CF g~,\\
  \  \CH\ &\mapsto\ \tilde{\CH}\ :=\ g^{-1}\acton \CH-\tilde{\CF}\acton \Lambda~,\\
  \  G\ & \mapsto\ \tilde{G} \ :=  \ g^{-1}\acton G - \left( \tilde{\CF} \acton (\Sigma -\tfrac{1}{2}\{\Lambda, \Lambda \})\right)+ \{\Lambda, \tilde{\CH} \}~-~\{\tilde{\CH},\Lambda \}~-~\{\Lambda, \tilde{\CF}\acton \Lambda\}~,
  \end{aligned}
  \end{equation}
  %\end{subequations}
  where $g$ is a $\sG$-valued function and $\Lambda$ and $\Sigma $ are $\sLie(\sH)$ and $\sLie(\sL)$-valued $1$- and $2$-forms, respectively.
\end{definition} 

For consistency of the parallel transport described by this local connective structure, it is necessary that both the 2- and 3-form fake curvatures $\CF$ and $\CH$ vanish. 
 \begin{definition}
 A local connective structure $(A, B, C)$ is said to be \uline{flat}, if all curvatures vanish: $\CF=0$, ~$\CH=0$ and $G=0$. 
 \end{definition}
\noindent Again, note that as in the case of principal $2$-bundles, flat connective structures on principal $3$-bundles remain flat under the gauge transformations \eqref{eq:gauge-trafos-3}.

The Poincar\'e lemma here reads as follows.
\begin{theorem}\label{thm:Poincare-3}
For any flat local connective structure $(A,B,C)$ on a patch $U$ and any point $p\in U$, there is a neighborhood $U_p\subset U$ of $p$ such that $(A,B,C)$ is pure gauge. That is, it can be written as
 \begin{equation}\label{eq:pure-gauge-3}
  \begin{aligned}
    A\ & =\ g^{-1} \dd g-\sft(\Lambda)~,\\
    B\ & =\ -(\dd +A\acton)\Lambda -\tfrac{1}{2} \sft(\Lambda)\acton \Lambda -\sft( \Sigma)~,\\
    C\ & =\ - \left( (\dd + A\acton )+ \sft(\Lambda)\acton \right)\Sigma +\{B+ \tfrac{1}{2}(\dd + A\acton )\Lambda + \tfrac{1}{2}[\Lambda, \Lambda], ~\Lambda\} +\\ &~~~~~~~~~~~~~~~~~~+\{\Lambda,  B- \tfrac{1}{2}(\dd + A\acton )\Lambda- \tfrac{1}{2}[\Lambda,\Lambda] \}~,\\
  \end{aligned}
 \end{equation}
 for some $\sG$-valued function $g$, $\sLie(\sH)$-valued 1-form $\Lambda$ and $\sLie(\sL)$-valued 2-form $\Sigma$ on $U_p$.
\end{theorem}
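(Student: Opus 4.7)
The plan is to mirror the argument used for Theorem \ref{thm:Poincare-2}: recast \eqref{eq:pure-gauge-3} as a first-order differential system of the form $\dd\omega=\Psi_{p+1}(x,\omega)$ and then apply Proposition \ref{prop:Poincare-lemma}. Assuming $\sG$, $\sH$, $\sL$ are matrix groups (the general case follows by identical manipulations), I would solve the three equations in \eqref{eq:pure-gauge-3} for $\dd g^{-1}$, $\dd\Lambda$ and $\dd\Sigma$. The first two are direct analogues of \eqref{eq:pure-gauge-2aa},
\begin{equation*}
\dd g^{-1}\ =\ -(A+\sft(\Lambda))\,g^{-1}\ =:\ \Psi_1,\qquad \dd\Lambda\ =\ -B-A\acton\Lambda-\tfrac{1}{2}\sft(\Lambda)\acton\Lambda-\sft(\Sigma)\ =:\ \Psi_2.
\end{equation*}
The $C$-equation contains $(\dd+A\acton)\Lambda$ inside the Peiffer-lifting terms; substituting $\dd\Lambda=\Psi_2$ into these terms produces a polynomial expression $\Psi_3(A,B,C,g^{-1},\Lambda,\Sigma)$ for $\dd\Sigma$. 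The axiom $\sft(\{\chi_1,\chi_2\})=[\chi_1,\chi_2]-\sft(\chi_1)\acton\chi_2$ is convenient for consolidating the several $\sft$-valued contributions that arise. The resulting system is exactly of the mixed-degree type to which Proposition \ref{prop:Poincare-lemma} applies, with a $0$-form, a $1$-form and a $2$-form as unknowns.

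The hypothesis of Proposition \ref{prop:Poincare-lemma} then requires $\dd\Psi_i|_x=0$ whenever the initial data $(g_0^{-1},\Lambda_0,\Sigma_0)$ satisfies $\dd\omega_0=\Psi(x,\omega_0)$ at $x$, with $\CF=\CH=G=0$. The verification for $\Psi_1$ is literally the one carried out in the proof of Theorem \ref{thm:Poincare-2} and uses only $\CF=0$. The verification $\dd\Psi_2|_x=0$ proceeds by replacing $\dd A$ using $\CF=0$, $\dd B$ using $\CH=0$, and then $\dd g^{-1}$ and $\dd\Lambda$ by $\Psi_1$ and $\Psi_2$; axioms (i)--(ii) of the differential $2$-crossed module deliver the cancellation, and the $\sft(C)$ piece inside $\CH$ is precisely what accommodates the $\sft(\Sigma)$ term in $\Psi_2$.

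The main obstacle, and the core of the proof, is the verification $\dd\Psi_3|_x=0$. Here one must differentiate a rather lengthy expression built from $C$, $A\acton\Sigma$, $\sft(\Lambda)\acton\Sigma$ and the Peiffer-lifting combinations $\{B,\Lambda\}$, $\{\Lambda,B\}$, $\{[\Lambda,\Lambda],\Lambda\}$; substitute $\dd C=-A\acton C-\{B,B\}$ from $G=0$, $\dd B=-A\acton B+\sft(C)$ from $\CH=0$ and $\dd A=-\tfrac12[A,A]+\sft(B)$ from $\CF=0$; and finally eliminate $\dd\Lambda$ via $\Psi_2$. Showing that all terms cancel will require the full strength of the $2$-crossed module axioms (iv)--(vi)---in particular the Peiffer pentagon identity, the formula for $\{\chi,\sft(\lambda)\}$, and $\{\sft(\lambda_1),\sft(\lambda_2)\}=[\lambda_1,\lambda_2]$---together with $\sG$-equivariance of $\{\cdot,\cdot\}$ and the compatibility of $\acton$ with $\sft$. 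Once this closure condition is established, Proposition \ref{prop:Poincare-lemma} and Definition \ref{def:solvable} furnish, for every $p\in U$, a neighborhood $U_p\subset U$ and data $(g,\Lambda,\Sigma)$ on $U_p$ solving \eqref{eq:pure-gauge-3}, which is the claim.
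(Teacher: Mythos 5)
Your proposal follows essentially the same route as the paper's proof: recast \eqref{eq:pure-gauge-3} as a mixed-degree first-order system $\dd\omega=\Psi_{p+1}(x,\omega)$, verify $\dd\Psi_{1,2,3}|_x=0$ using the flatness equations and the differential $2$-crossed module axioms, and invoke Proposition \ref{prop:Poincare-lemma}. The paper likewise only outlines the ``lengthy calculation'' for $\dd\Psi_3$, so your level of detail (and your correct identification of which axioms carry the cancellations) matches what is actually written there.
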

\begin{proof}
 The proof is fully analogous to that of theorem \ref{thm:Poincare-2}, but considerably more involved. We therefore only outline the computations. First, we rewrite \eqref{eq:pure-gauge-3} as follows.
 \begin{equation}\label{eq:eds}
 \begin{aligned}
  \dd g^{-1}&=-A g^{-1}-\sft(\Lambda)g^{-1}=:\Psi_1(g,\Lambda,\Sigma)~,\\
  \dd \Lambda&=-B-A\acton\Lambda-\tfrac12\sft(\Lambda)\acton\Lambda-\sft(\Sigma)=:\Psi_2(g,\Lambda,\Sigma)~,\\
  \dd \Sigma&=-C-A\acton \Sigma-\sft(\Lambda)\acton\Sigma+\{B+ \tfrac{1}{2}(\dd + A\acton )\Lambda + \tfrac{1}{2}[\Lambda, \Lambda], ~\Lambda\} +\\ &\hspace{4.5cm}+\{\Lambda,  B- \tfrac{1}{2}(\dd + A\acton )\Lambda- \tfrac{1}{2}[\Lambda,\Lambda] \}=:\Psi_3(g,\Lambda,\Sigma)~.
 \end{aligned}
 \end{equation}
 Proposition \ref{prop:Poincare-lemma} guarantees that \eqref{eq:eds} are solvable on $U$, if $\dd \Psi_{1,2,3}(g_0,\Lambda_0,\Sigma_0)$ vanish at any $x\in U$ if $\CF=\CH=G=0$ as well as 
 \begin{equation}\label{eq:ext-derivaties-at-x}
 \dd g_0^{-1}|_x=\Psi_1(g_0,\Lambda_0,\Sigma_0)|_x~,~~~\dd \Lambda_0|_x=\Psi_2(g_0,\Lambda_0,\Sigma_0)|_x~,~~~\dd\Sigma_0|_x=\Psi_3(g_0,\Lambda_0,\Sigma_0)|_x~.
 \end{equation}
We now have to rewrite $\dd \Psi_{1,2,3}(g_0,\Lambda_0,\Sigma_0)$ in terms of quantities which we know at $x$. The exterior derivative will hit either a potential $n$-form or a gauge parameter. The exterior derivatives of the gauge parameters are given in \eqref{eq:eds} and the exterior derivatives of the potential $n$-forms can be rewritten using the flatness equations $\CF=\CH=G=0$. 

Putting everything together, we find after a lengthy calculation that given \eqref{eq:ext-derivaties-at-x}, $\dd \Psi_{1,2,3}(g_0,\Lambda_0,\Sigma_0)$ indeed vanish for flat local connective structures. Again, solvability of \eqref{eq:eds} over $U$ implies that for all $p\in U$ there exists a neighborhood $U_p$ over which \eqref{eq:eds} have a solution.
\end{proof}

\section{Constructive proof of the Poincar\'e lemma}\label{sec:3}

We come now to a constructive proof of the Poincar\'e lemma which yields the explicit gauge transformation trivializing a flat local connective structure. Our proof will be a direct generalization of that of \cite{Voronov:0905.0287}, where the author constructs a solution to a Cauchy problem, relating pullbacks of flat connections along homotopic maps by gauge transformation. On a contractible patch of a smooth manifold, flat connections are therefore gauge equivalent to pullbacks along constant maps. This implies that flat connections are locally pure gauge. 

\subsection{Poincar\'e lemma on principal 2-bundles}

Let $U$ be an open, contractible patch of a smooth manifold $M$. Over $U\times [0,1]$, let $(\hat A,\hat B)$ be a local connective structure with underlying crossed module of Lie groups $(\sH\xrightarrow{~\sft~}\sG,\acton)$. Let $(\frh\xrightarrow{~\sft~}\frg,\acton)$ denote the corresponding crossed module of Lie algebras. To simplify our notation, we assume that $\sG$ and $\sH$ are matrix groups. We decompose the differential forms $\hat A$ and $\hat B$ according to 
\begin{equation}
 \hat A=\hat A_x+ \dd t\,\hat A_t \eand \hat B=\hat B_x+\dd t\,\hat B_t ~,
\end{equation}
where $\der{t}\lrcorner \hat A _x=0$ and $\der{t}\lrcorner \hat B_x=0$. Similarly, we decompose the exterior derivative
\begin{equation}
 \dd \omega=\dd_x \omega+\dd t~\der{t}\omega=\dd_x \omega +\dd t~\dot{\omega}~.
\end{equation}

We are interested in solutions $g\in \CC^\infty(U\times [0,1],\sG)$ and $\Lambda\in \Omega^1(U\times [0,1],\frh)$ to the following Cauchy problem, which arises by considering gauge transformations of the components $\hat{A}_t$ and $\hat{B}_t$ to $0$, cf.\ \eqref{eq:gauge-trafos-2}:
\begin{subequations}\label{eq:Cauchy-Problem-2}
%\begin{equation}\label{eq:Cauchy-Problem-a}
%\dot g=-\hat A_t\,g\eand \dot \Lambda=-g^{-1}\acton\hat B_t
%\end{equation}
%\todo{Use instead the following Cauchy problem, which arises by considering gauge transformations of the components $A_t$ and $B_t$ to $0$, cf.\ \eqref{eq:gauge-trafos-2}, check signs etc.}:
\begin{equation}\label{eq:Cauchy-Problem-a}
 \dot g=-\hat A_t\,g+ g\sft(\Lambda_t)\eand \dot \Lambda_x=g^{-1}\acton \hat B_t+\dd_x\Lambda_t+\left(g^{-1}A_xg +g^{-1}\dd_xg\right)\acton \Lambda_t
\end{equation}
%todo{adjust computations below}
with initial conditions
\begin{equation}\label{initial:cond}
 g(x,0)=\unit_\sG\eand \Lambda(x,0)=0\efor x\in U~.
\end{equation}
\end{subequations}

\begin{proposition}\label{prop:3.1}
 Let $(g,\Lambda)$ be a solution to the Cauchy problem \eqref{eq:Cauchy-Problem-2}. Then
 \begin{equation}\label{eq:prop1}
  -g_1^{-1}\dd g_1+ \int_0^1\dd t~\der{t}\lrcorner(g^{-1}\hat\CF g)=g_1^{-1}\left.\hat A_x\right|_{t=1}g_1-\left.\hat A_x\right|_{t=0}- \left. \sft(\Lambda_x)\right|_{t=1}~, 
 \end{equation}
 where $g_1:=g(x,1)$ and $\hat \CF$ is the fake curvature of the local connective structure $(\hat A,\hat B)$.
\end{proposition}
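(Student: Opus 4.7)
The plan is to recognise the claimed identity as the fundamental theorem of calculus for a well-chosen function of $t$. Motivated by the gauge transformation of $A$ in \eqref{eq:gauge-trafos-2}, I introduce
\begin{equation}
P(t)\,:=\,g^{-1}\hat{A}_x\,g+g^{-1}\dd_x g-\sft(\Lambda_x)~,
\end{equation}
with $g=g(x,t)$, $\hat{A}_x=\hat{A}_x(x,t)$ and $\Lambda_x=\Lambda_x(x,t)$. The initial conditions \eqref{initial:cond} give $P(0)=\hat{A}_x|_{t=0}$, while $P(1)=g_1^{-1}\hat{A}_x|_{t=1}g_1+g_1^{-1}\dd g_1-\sft(\Lambda_x)|_{t=1}$. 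Consequently \eqref{eq:prop1} is equivalent to the pointwise identity $\dot P=\der{t}\lrcorner(g^{-1}\hat{\CF}g)$, integrated from $0$ to $1$, after isolating $g_1^{-1}\dd g_1$.

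For the right-hand side I decompose $\hat{A}=\hat{A}_x+\dd t\,\hat{A}_t$ and $\hat{B}=\hat{B}_x+\dd t\,\hat{B}_t$ in $\hat{\CF}=\dd\hat{A}+\tfrac12[\hat{A},\hat{A}]-\sft(\hat{B})$ and extract the $\dd t$-component, yielding
\begin{equation}
\der{t}\lrcorner\hat{\CF}\,=\,\dot{\hat{A}}_x-\dd_x\hat{A}_t-[\hat{A}_x,\hat{A}_t]-\sft(\hat{B}_t)~,
\end{equation}
so that $\der{t}\lrcorner(g^{-1}\hat{\CF}g)=g^{-1}\bigl(\dot{\hat{A}}_x-\dd_x\hat{A}_t-[\hat{A}_x,\hat{A}_t]-\sft(\hat{B}_t)\bigr)g$.

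For the left-hand side I differentiate $P$ using $\dot{g^{-1}}=-g^{-1}\dot g g^{-1}$ together with the Cauchy equations \eqref{eq:Cauchy-Problem-a}. Substituting $\dot g=-\hat{A}_t g+g\sft(\Lambda_t)$ reorganises the output into $g^{-1}(\dot{\hat{A}}_x-\dd_x\hat{A}_t-[\hat{A}_x,\hat{A}_t])g$, plus a commutator $[\sft(\Lambda_t),\,g^{-1}\hat{A}_x g+g^{-1}\dd_x g]$, plus a residual $\sft(\dd_x\Lambda_t-\dot\Lambda_x)$. Applying $\sft$ to the second Cauchy equation and invoking the crossed-module identities $\sft(\gamma\acton\chi)=[\gamma,\sft(\chi)]$ and $\sft(g^{-1}\acton\hat{B}_t)=g^{-1}\sft(\hat{B}_t)g$ rewrites this residual as $-g^{-1}\sft(\hat{B}_t)g$ minus exactly the same commutator, giving $\dot P=\der{t}\lrcorner(g^{-1}\hat{\CF}g)$ after cancellation.

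The main obstacle is the algebraic bookkeeping of this last step: one must correctly track the signs produced by $\dot{g^{-1}}$ and by $\dd_x$ acting through the $t$-dependence of $g$ in $\dd_x\dot g$, and one must recognise that the $\sft(\Lambda_t)$-commutator generated by the adjoint part of $\dot P$ is precisely the one coming from $\sft$ applied to the equation for $\dot\Lambda_x$, so that the two cancel. Once this cancellation has been verified, integrating over $t\in[0,1]$ and using the explicit expressions for $P(0)$ and $P(1)$ produces the identity \eqref{eq:prop1}.
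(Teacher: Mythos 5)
Your proposal is correct and takes essentially the same route as the paper: both reduce \eqref{eq:prop1} to the pointwise identity $\der{t}\big(g^{-1}\hat A_x g + g^{-1}\dd_x g - \sft(\Lambda_x)\big)=\der{t}\lrcorner\big(g^{-1}\hat\CF g\big)$ by differentiating the same three terms, substituting the Cauchy equations \eqref{eq:Cauchy-Problem-a} and the crossed-module identities $\sft(\gamma\acton\chi)=[\gamma,\sft(\chi)]$ and $\sft(g^{-1}\acton\hat B_t)=g^{-1}\sft(\hat B_t)g$, and then integrating over $t$. Naming the primitive $P(t)$ explicitly is a tidy repackaging of the paper's computation (note only that the commutator produced by the adjoint terms is $[\,g^{-1}\hat A_x g+g^{-1}\dd_x g,\sft(\Lambda_t)]$, the opposite orientation to the one you wrote, which does not affect the cancellation since the same commutator reappears with a minus sign from $-\sft(\dot\Lambda_x)$).
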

\begin{proof}
 First, using \eqref{eq:Cauchy-Problem-a}, we readily compute 
 \begin{equation}\label{3bundle: I1}
  \der{t}\left(g^{-1}\dd_x g\right)=-g^{-1} (\dd_x \hat{A}_t) g+ \sft \left(g^{-1}\dd_x g \acton \Lambda_t\right)+ \dd_x\sft(\Lambda_t)~,
 \end{equation}
 and 
\begin{equation}\label{3bundle: I2}
\begin{aligned}
\der{t}\left(g^{-1}\hat{A}_xg  \right)&=  g^{-1} \left( \dot{\hat{A}}_x + [ \hat{A}_t , \hat{A}_x ]  \right) g + \sft(g^{-1}\hat{A}_xg  \acton \Lambda _t )~.
\end{aligned}
\end{equation}

 Moreover,  
 \begin{equation}\label{eq:aux1}
  g_1^{-1}\dd g_1=\left.(g^{-1}\dd_x\,g)\right|_{t=1}\eand  \left.\dd_x g\right|_{t=0}=0~.
 \end{equation}
% this yields
% \begin{equation}
%  g_1^{-1}\dd g_1=\int_0^1 \dd t~\der{t}\left(g^{-1} \dd_x g\right)=-\int_0^1\dd t~ g^{-1}\left(\dd_x \hat A_t\right)g~.
% \end{equation}
 We would now like to rewrite \eqref{3bundle: I1} and \eqref{3bundle: I2} in terms of the fake curvature of $(\hat A,\hat B)$. Note that
 \begin{equation}
 \begin{aligned}
 \int_0^1 \dd t~\der{t}\lrcorner (g^{-1}\hat\CF g) &=\int_0^1 \dd t~g^{-1}\Big(-\dd_x\hat A_t+\dot{\hat A}_x+[\hat A_t,\hat A_x]-\sft(\hat B_t)\Big)g\\
 &= \int_0^1 \dd t~ \der{t}\left( g^{-1}\dd_x g\right) + \int_0^1 \dd t~\der{t}\left(g^{-1}\hat{A}_xg  \right)+ \\  &~~~- \int_0^1 \dd t~\sft\Big( g^{-1}\acton \hat{B}_t+\dd_x \Lambda_t+ \left(g^{-1}A_xg +g^{-1}\dd_xg\right)\acton \Lambda_t \Big)~.
 \end{aligned}
 \end{equation}
 Using \eqref{eq:Cauchy-Problem-a} and \eqref{eq:aux1}, we can further simplify this to
 \begin{equation}
  \int_0^1 \dd t~\der{t}\lrcorner (g^{-1}\hat\CF g)=g_1^{-1}\dd g_1+\int_0^1 \dd t~\der{t}(g^{-1}\hat A_xg)-\int_0^1 \dd t~\der{t}\sft(\Lambda_x)~,
 \end{equation}
 which is obviously equivalent to \eqref{eq:prop1}.
\end{proof}
\noindent Next, we prove an analogous statement involving the 3-form curvature $\hat H$ of $(\hat A,\hat B)$:
\begin{proposition}\label{prop:3.2}
 Let $(g,\Lambda)$ be a solution to the Cauchy problem \eqref{eq:Cauchy-Problem-2}. Then
 \begin{equation}\label{eq:prop2}
 \begin{aligned}
 \dd_x \Lambda_1 &+g^{-1}_{1}\dd g_{1}\acton \left.\Lambda_x\right|_{t=1}  - \left.\left( \Lambda_x \wedge \Lambda_x\right)\right|_{t=1} + \left( g^{-1}_{1}\left. \hat{A}_x\right|_{t=1}g_1 \right) \acton \left.\Lambda_x \right|_{t=1} =~\\
 & -\int_0^1\dd t~\der{t}\lrcorner\left(g^{-1}\acton \hat H-(g^{-1}\hat \CF g)\acton \Lambda \right)+\left.g_1^{-1}\acton\hat B_x\right|_{t=1}-\left.\hat B_x\right|_{t=0}~,
 \end{aligned}
  \end{equation}
 where $g_1:=g(x,1)$, $\Lambda_1=\Lambda(x,1)$ and $\hat \CF$ and $\hat H$ are the fake and 3-form curvatures of the local connective structure $(\hat A,\hat B)$.
\end{proposition}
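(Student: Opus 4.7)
The plan is to mirror the structure of Proposition~\ref{prop:3.1}, but the algebra is considerably heavier because of the interplay between the $\sG$-action by conjugation, the $\sG$-action $\acton$ on $\frh$, and the Peiffer identity.

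I would begin by introducing the auxiliary $\frh$-valued 2-form
$$P(t)\ :=\ \dd_x\Lambda_x + (g^{-1}\dd_x g)\acton \Lambda_x - \Lambda_x\wedge \Lambda_x + (g^{-1}\hat A_x g)\acton\Lambda_x - g^{-1}\acton \hat B_x~,$$
chosen so that the initial conditions \eqref{initial:cond} give $P(0)= -\hat B_x|_{t=0}$, while $P(1)$ reproduces the left-hand side of \eqref{eq:prop2} diminished by $g_1^{-1}\acton \hat B_x|_{t=1}$. Consequently, \eqref{eq:prop2} is equivalent to the pointwise identity
$$\dot P(t)\ =\ -\der{t}\lrcorner\bigl(g^{-1}\acton \hat H\ -\ (g^{-1}\hat \CF g)\acton \Lambda\bigr)~,$$
and the conclusion will follow by integrating over $t\in[0,1]$ and invoking \eqref{initial:cond}.

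To establish the pointwise identity I would differentiate each summand of $P$ separately, substituting from the Cauchy problem \eqref{eq:Cauchy-Problem-a} to eliminate $\dot g$ and $\dot\Lambda_x$. In parallel with equations \eqref{3bundle: I1}--\eqref{3bundle: I2}, a direct calculation produces
$$\der{t}(g^{-1}\acton\hat B_x)\ =\ g^{-1}\acton \dot{\hat B}_x\ -\ (g^{-1}\hat A_t g)\acton(g^{-1}\acton \hat B_x)\ -\ [\Lambda_t,\, g^{-1}\acton \hat B_x]~,$$
after invoking the crossed module relations $\sft(g\acton \chi)=g\sft(\chi)g^{-1}$ and $\sft(\chi_1)\acton\chi_2=[\chi_1,\chi_2]$. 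Substituting the identity $\dot{\hat B}_x = \der{t}\lrcorner \hat H + \dd_x\hat B_t - \hat A_t\acton \hat B_x + \hat A_x\acton \hat B_t$, which is simply the $\dd t$-component of the definition $\hat H = \dd\hat B + \hat A\acton \hat B$, produces the desired $g^{-1}\acton(\der{t}\lrcorner \hat H)$ contribution.

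The remaining derivatives $\der{t}(\dd_x\Lambda_x)$, $\der{t}\bigl((g^{-1}\dd_x g)\acton\Lambda_x\bigr)$, $\der{t}(\Lambda_x\wedge\Lambda_x)$ and $\der{t}\bigl((g^{-1}\hat A_x g)\acton\Lambda_x\bigr)$ are treated analogously. Expanding $\hat\CF = \dd\hat A + \tfrac12[\hat A,\hat A] - \sft(\hat B)$, each such derivative will decompose into three types of contributions: one piece that reassembles into $\der{t}\lrcorner(g^{-1}\hat \CF g)\acton\Lambda$; one piece proportional to $\sft(\hat B_t)\acton \Lambda_x$ that, via the Peiffer identity, turns into a commutator absorbed elsewhere; and a collection of cross-terms involving $\hat A_t$, $\Lambda_t$, and $\dd_x\Lambda_t$ that must cancel pairwise. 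The main obstacle is precisely verifying these cancellations. It will require repeated use of the equivariance $g^{-1}\acton[X,Y]=[g^{-1}\acton X,\, g^{-1}\acton Y]$, both crossed module identities, and the commutation of $\dd_x$ with $\der{t}$; as in Proposition~\ref{prop:3.1}, the cancellations are forced on structural grounds, but their explicit verification is a lengthy bookkeeping exercise. Once $\dot P$ has been brought into the claimed form, integrating from $0$ to $1$ delivers \eqref{eq:prop2}.
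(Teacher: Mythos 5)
Your overall strategy coincides with the paper's own proof: the auxiliary $2$-form $P(t)$ you introduce is exactly the combination whose $t$-derivative the paper assembles in \eqref{3curv:derv} (with your $\Lambda_x\wedge\Lambda_x$ term appearing there as $\int_0^1\dd t\,\sft(\dot\Lambda_x)\acton\Lambda_x$ via the Peiffer identity), and your reduction of \eqref{eq:prop2} to the pointwise identity $\dot P=-\der{t}\lrcorner\bigl(g^{-1}\acton\hat H-(g^{-1}\hat\CF g)\acton\Lambda\bigr)$, together with the boundary values $P(0)=-\hat B_x|_{t=0}$ and $P(1)$ equal to the left-hand side of \eqref{eq:prop2} minus $g_1^{-1}\acton\hat B_x|_{t=1}$, is correct and uses \eqref{initial:cond} exactly as the paper does.

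However, the one intermediate formula you display in full carries a sign error that would prevent the cancellation from closing. From $\dot g=-\hat A_t g+g\sft(\Lambda_t)$ in \eqref{eq:Cauchy-Problem-a} one has $g^{-1}\dot g=-g^{-1}\hat A_t g+\sft(\Lambda_t)$, and hence
\begin{equation*}
 \der{t}\bigl(g^{-1}\acton\hat B_x\bigr)\ =\ g^{-1}\acton\dot{\hat B}_x\ +\ (g^{-1}\hat A_t g)\acton\bigl(g^{-1}\acton\hat B_x\bigr)\ -\ [\Lambda_t,\,g^{-1}\acton\hat B_x]~,
\end{equation*}
with a \emph{plus} sign on the $\hat A_t$ term, not the minus sign you wrote; this is the content of \eqref{3curv:Iv}. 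The plus sign matters: by equivariance, $(g^{-1}\hat A_t g)\acton(g^{-1}\acton\hat B_x)=g^{-1}\acton(\hat A_t\acton\hat B_x)$, and this is precisely what cancels the $+\hat A_t\acton\hat B_x$ inside $\der{t}\lrcorner\hat H=\dot{\hat B}_x-\dd_x\hat B_t+\hat A_t\acton\hat B_x-\hat A_x\acton\hat B_t$ once you substitute for $\dot{\hat B}_x$ as you propose. With your sign a residual term $-2\,g^{-1}\acton(\hat A_t\acton\hat B_x)$ survives in $\dot P$ and the claimed pointwise identity fails. Beyond this, the remaining derivatives and pairwise cancellations are asserted rather than verified, but they do go through as in \eqref{3curv:main}--\eqref{3curv:derv} of the paper once the sign above is corrected.
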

\begin{proof}
 %\todo{Getachew: Could you fill this one in, following the proof of Proposition 3.1}
 In this case we have
 \begin{align}\label{ddlammbda}
 \left.\dd_x \Lambda_x\right|_{t=0}= 0 \eand \dd_x \Lambda_1=\left.\dd_x \Lambda_x \right|_{t=1}~.
 \end{align}
Moreover, by direct differentiation and using \eqref{eq:Cauchy-Problem-a}, we obtain
  \begin{align}\label{dlammbda}
  \der{t}\left( d_x\Lambda_x \right) =\dd_x\left( g^{-1}\acton \hat{B}_t + \dd_x\Lambda_t ~+ (g^{-1}\hat{A}_xg + g^{-1}\dd_xg)\acton \Lambda_t \right)~,
\end{align}
and 
\begin{equation}\label{dxg}
  \begin{aligned}
 \dfrac{\partial}{\partial t}\left( g^{-1}\dd_x g \acton \Lambda_x \right)&= \left( -g^{-1} (\dd_x \hat{A}_t) g+ \sft(g^{-1}\dd_x g \acton \Lambda_t) + \dd_x\sft(\Lambda_t) \right) \acton \Lambda_x ~+\\
&~+g^{-1}\dd_x g\acton \left( g^{-1}\acton \hat{B}_t + \dd_x\Lambda_t+ \left( g^{-1}\hat{A}_xg + g^{-1}\dd_x g\right) \acton \Lambda_t \right)~.\\
% & = \left( -g^{-1} (\dd_x \hat{A}_t) g+ \sft(g^{-1}(\dd_x g) \acton \Lambda_t) + \dd_x\sft(\Lambda_t) \right) \acton \Lambda_x  ~+  \\
\end{aligned}
\end{equation}
      
Thus, considering the expressions of the fake and the $3$-curvatures of a local connective structure $(\hat{A}, \hat{B})$ yields 
%\begin{equation}\label{3bundle:I}
%\begin{aligned}
%    &g^{-1}_{1} \dd g_1 \acton \Lambda_1= \int_{0}^{1}\dd t~\der{t}\left( (g^{-1}\dd_xg) \acton \Lambda \right) = \int_{0}^{1}\dd t~ \left( -g^{-1}(\dd_x \hat{A}_t )g \acton \Lambda \right),~ \eand  \\
%    &\dd \Lambda_1 = \int_{0}^{1}\dd t~ \der{t}\left( \dd_ x \Lambda \right)=\int_{0}^{1}\dd t~ \left( g^{-1}\acton \left( - \dd _x \hat{B}_t \right)\right) .
%\end{aligned}
%\end{equation}

\begin{equation}\label{3curv:main}
\begin{aligned}
\int_{0}^{1}\dd t~\der{t}&\lrcorner \Big(- g^{-1}\hat{\CF} g\acton \Lambda + g^{-1}\acton \hat{H}\Big)= \\ &\int_{0}^{1}\dd t~ \left(  g^{-1}\dd_x \hat{A}_t g \acton \Lambda_x   - g^{-1}\left( \dot{\hat{A}}_x  + [ \hat{A}_t,\hat{A}_x ] \right)g \acton \Lambda_x \right)~ +\\ 
+ & \int_{0}^{1}\dd t~ \left(- g^{-1}\left( \dd_x \hat{A}_x + \hat{A}_x \wedge \hat{A}_x  \right) g \acton \Lambda_t  + g^{-1}\sft(\hat{B}_x ) g \acton \Lambda_t  \right)+ \\
+& \int_{0}^{1}\dd t~ \Big( g^{-1}\sft(\hat{B}_t ) g \acton \Lambda_x + g^{-1}\acton \left( \dot{\hat{B}}_x +\hat{A}_t \acton \hat{B}_x -\dd_x \hat{B}_t - \hat{A}_x \acton \hat{B}_t  \right) \Big )  ~.
\end{aligned}
\end{equation}
%Thus to simplify \eqref{3bundle:II}, we use \eqref{3bundle:I} and  the following identities. 
But by direct differentiation and using \eqref{eq:Cauchy-Problem-a} we have 
\begin{equation}\label{3curv:III}
\begin{aligned}
\der{t}\Big( (g^{-1}\hat{A}_xg )&\acton \Lambda_x \Big)= \left( g^{-1} \left( \dot{\hat{A}}_x + [ \hat{A}_t , \hat{A}_x ] \right) g  \right)\acton \Lambda_x +  \sft(g^{-1}\hat{A}_xg  \acton \Lambda _t ) \acton \Lambda_x ~+ \\
~~~&+ ( g^{-1}\hat{A}_x g ) \acton \left( g^{-1} \acton \hat{B}_t + \dd_x \Lambda_t +  g^{-1}\hat{A}_xg \acton \Lambda_t  +  g^{-1}\dd_x g \acton \Lambda_t  \right)~, \\
\end{aligned}
\end{equation}
and
\begin{align}\label{3curv:Iv}
&\dfrac{\partial}{\partial t}\left(g^{-1}\acton \hat{B}_x \right)= \left( g^{-1}\hat{A}_t  -\sft(\Lambda_t)g^{-1}\right) \acton \hat{B}_x +g^{-1}\acton \dot{\hat{B}}_x~.
\end{align}
Now applying  \eqref{eq:Cauchy-Problem-a}, after combining \eqref{dlammbda}, \eqref{dxg}, \eqref{3curv:III} and \eqref{3curv:Iv}, gives 
\begin{equation}\label{3curv:derv}
\begin{aligned}
\int_{0}^{1}\dd t~\der{t}  \lrcorner\Big(  - g^{-1}\hat{\CF} &g\acton \Lambda + g^{-1}\acton \hat{H}  \Big)=\\
&\int_{0}^{1}\dd t~\der{t}\left( -d_x\Lambda_x  \right)   +  \int_{0}^{1}\dd t~\dfrac{\partial}{\partial t}  \left( - g^{-1}\dd_x g \acton \Lambda_x  \right)  ~+ \\
+&\int_{0}^{1}\dd t~\der{t} \left( -  (g^{-1}\hat{A}_xg )\acton \Lambda_x \right) + \int_{0}^{1}\dd t~\der{t}  \left(  g^{-1}\acton \hat{B}_x \right) ~+\\
+&\int_{0}^{1}\dd t~ \left( \sft ( \dot{\Lambda}_x )\acton \Lambda_x \right)~.
\end{aligned}
\end{equation}
%  \sft(\Lambda)\acton \der{t}\Lambda = .

After simplification of \eqref{3curv:derv} using \eqref{initial:cond} and \eqref{ddlammbda}, we finally arrive at
\begin{equation}
 \begin{aligned}
 \dd_x \Lambda_1 + g^{-1}_{1}&\dd g_{1}\acton \left.\Lambda_x\right|_{t=1}  -\left.\left( \Lambda_x \wedge \Lambda_x\right)\right|_{t=1} + \left( g^{-1}_{1}\left. \hat{A}_x\right|_{t=1}g_1 \right) \acton \left.\Lambda_x \right|_{t=1} =  \\
 & - \int_{0}^{1}\dd t~\dfrac{\partial}{\partial t} \lrcorner\left( -g^{-1}\hat{\CF} g \acton \Lambda + g^{-1}\acton \hat{H} \right) +
 g^{-1}_{1}\acton \hat{B}_x\vert_{t=1}-\hat{B}_x\vert_{t=0}~. 
 \end{aligned}
\end{equation}
\end{proof}

We can now follow \cite{Voronov:0905.0287} further and consider homotopic maps $h_{0,1}(x):U\rightrightarrows V$ between local patches $U$ and $V$ of some smooth manifolds. Let $h(x,t):U\times [0,1]\rightarrow V$ with $h(x,0)=h_0(x)$ and $h(x,1)=h_1(x)$ be a homotopy satisfying $\der{t} h(x,t)|_{t=0,1}=0$. Because the pullback is compatible with the wedge product and the exterior derivative, propositions \ref{prop:3.1} and \ref{prop:3.2} yield the following corollary.
\begin{corollary}
 The pullbacks of a local connective structure $(A,B)$ on the patch $V$ of some manifold along homotopic maps $h_{0,1}:U\rightrightarrows V$ are related as follows:
 \begin{equation}
  \begin{aligned}
  -g_1^{-1}\dd g_1+\sft(\Lambda_{1,x})+\int_0^1\dd t~\der{t}\lrcorner(g^{-1}h^*(\CF) g)&=g_1^{-1}h_1^*( A_x)g_1-h_0^* A_x~, \\
 \dd_x \Lambda_1+\left(g_1^{-1}h_1^* (A_x)g_1\right)\acton \Lambda_{1,x}+(g_1^{-1}\dd g_1)\acton \Lambda_{1,x}-\left(\Lambda_{1,x}\wedge\Lambda_{1,x}\right)&=~\\
 -\int_0^1\dd t~\der{t}\lrcorner\left(g^{-1}\acton h^*H-g^{-1}h^*(\CF) g\acton \Lambda\right)&+g_1^{-1}\acton h_1^* B_x-h_0^* B_x~,
  \end{aligned}
 \end{equation}
 where $h$ denotes a homotopy between $h_0$ and $h_1$ with $\der{t} h(x,t)|_{t=0,1}=0$, $(g,\Lambda )$ is a solution of the Cauchy problem \eqref{eq:Cauchy-Problem-2} and $g_1=g(x,1)$, $\Lambda_1=\Lambda(x,1)$. In particular, the pullbacks for flat connective structures are gauge equivalent.
\end{corollary}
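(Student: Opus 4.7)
The plan is to apply propositions \ref{prop:3.1} and \ref{prop:3.2} verbatim to the pulled-back local connective structure $(\hat{A},\hat{B}):=(h^{*}A,h^{*}B)$ on the cylinder $U\times[0,1]$, regarded as a connective structure for the same crossed module of Lie groups. The hypothesis $\der{t}h(x,t)|_{t=0,1}=0$ enters crucially: since the pullback $h^{*}\alpha$ of any form $\alpha$ on $V$ picks up a $\dd t$-component only through terms proportional to $\dpar h/\dpar t$, the components $\hat{A}_t$ and $\hat{B}_t$ vanish identically at $t=0,1$. Hence in the decomposition $\hat{A}=\hat{A}_x+\dd t\,\hat{A}_t$ one has $\hat{A}_x|_{t=0}=h_0^{*}A$ and $\hat{A}_x|_{t=1}=h_1^{*}A$, and analogously for $\hat{B}_x$. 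Naturality of $\dd$, $[\cdot,\cdot]$, $\sft$ and $\acton$ under pullback further yields $\hat{\CF}=h^{*}\CF$ and $\hat{H}=h^{*}H$.

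Let $(g,\Lambda)$ be the solution of the Cauchy problem \eqref{eq:Cauchy-Problem-2} for this $(\hat{A},\hat{B})$. Substituting the identities above into the equation of proposition \ref{prop:3.1} and moving $\sft(\Lambda_x)|_{t=1}=\sft(\Lambda_{1,x})$ to the left-hand side reproduces the first equation of the corollary verbatim. The analogous substitution in proposition \ref{prop:3.2}, together with $g_1=g(x,1)$ and $\Lambda_1=\Lambda_x|_{t=1}$, gives the second equation. These are purely mechanical substitutions: the two propositions were set up exactly so that this step becomes tautological.

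For the concluding clause, specialise to $(A,B)$ flat, so that $\CF=0=H$ and both integrals on the right-hand sides drop out. The first surviving relation rearranges to
\begin{equation*}
h_0^{*}A\ =\ g_1^{-1}\,h_1^{*}(A)\,g_1+g_1^{-1}\dd g_1-\sft(\Lambda_{1,x}),
\end{equation*}
which is the first line of \eqref{eq:gauge-trafos-2} with parameters $(g_1,\Lambda_{1,x})$. Rewriting the second surviving relation by eliminating $g_1^{-1}h_1^{*}(A)g_1+g_1^{-1}\dd g_1$ via the first, and invoking the Peiffer identity in the form $\sft(\Lambda)\acton\Lambda=[\Lambda,\Lambda]=2\,\Lambda\wedge\Lambda$ for a $\frh$-valued one-form, exhibits it as the second line of \eqref{eq:gauge-trafos-2}. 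Thus $h_0^{*}(A,B)$ is the gauge transform of $h_1^{*}(A,B)$, as claimed. The only delicate aspect of the whole argument is the endpoint bookkeeping; the hypothesis $\der{t}h|_{t=0,1}=0$ is precisely what prevents spurious $\dd t$-contributions from polluting the restrictions and is the sole subtle ingredient of the proof.
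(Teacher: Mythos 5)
Your proposal is correct and follows exactly the route the paper takes: the paper's own justification is the one-line remark that pullback commutes with $\dd$ and $\wedge$ (hence with $\sft$, $\acton$, and the curvatures), so propositions \ref{prop:3.1} and \ref{prop:3.2} applied to $(h^*A,h^*B)$ give the displayed identities, with the condition $\der{t}h|_{t=0,1}=0$ ensuring $\hat{A}_x|_{t=0,1}=h_{0,1}^*A$ and $\hat{B}_x|_{t=0,1}=h_{0,1}^*B$. Your additional verification that the flat case reproduces the gauge transformation \eqref{eq:gauge-trafos-2} via the Peiffer identity is correct and is more explicit than what the paper records.
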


This corollary can now be used to prove the Poincar\'e lemma. Consider an open contractible patch $U$ of a smooth manifold and regard it as a subset of some vector space $\FR^d$ containing the origin $0_U$. We are interested in the homotopy $h(x,t):U\times [0,1]\rightarrow U$ with $h(x,t)=xtk(t)$ between $U$ and the point $0_U\in U$, where $k(t)$ is a smooth function such that $k'(t)|_{t=0,1}=0$, $k(0)=0$ and $k(1)=1$. Note that the pullback of the connective structure on $U$ along $h_0$ vanishes, which implies the following theorem.
\begin{theorem} (Higher Poincar\'e lemma)
Flat local connective structures are gauge equivalent to the trivial connective structure.
\end{theorem}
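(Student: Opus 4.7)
The plan is to deduce the theorem directly from the preceding corollary by choosing a homotopy that retracts the contractible patch onto a point. Let $(A,B)$ be a flat local connective structure on a contractible patch $U\subset\FR^d$ containing the origin $0_U$, and consider the two maps $h_0,h_1:U\rightrightarrows U$ given by the constant map $h_0(x)=0_U$ and the identity $h_1=\mathrm{id}_U$. A natural homotopy between them is $h(x,t)=x\,t\,k(t)$, with $k:[0,1]\to[0,1]$ smooth satisfying $k(0)=0$, $k(1)=1$ and $k'(0)=k'(1)=0$. The last condition is there precisely to enforce $\der{t}h(x,t)|_{t=0,1}=0$, which is the hypothesis required by the corollary.

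With this homotopy in place, two simplifications occur on the right-hand sides of the corollary's identities. First, since $h_0$ is constant, its pullback annihilates every positive-degree form, so $h_0^*A_x=0$ and $h_0^*B_x=0$. Second, flatness of $(A,B)$ gives $h^*\CF=0$ and $h^*H=0$, so that both integrals appearing in the corollary vanish identically. What remains is a pair of purely algebraic identities expressing $h_1^*A_x=A_x$ and $h_1^*B_x=B_x$ entirely in terms of the gauge parameters $g_1:=g(x,1)$ and $\Lambda_1:=\Lambda(x,1)$ obtained by evaluating at $t=1$ the solution of the Cauchy problem \eqref{eq:Cauchy-Problem-2} for the pulled-back data $(\hat A,\hat B)=(h^*A,h^*B)$.

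Comparing these residual identities with the gauge transformation rules \eqref{eq:gauge-trafos-2} applied to the trivial structure $(0,0)$, one sees that the gauge transformation with parameters $(g_1,\Lambda_1)$ sends $(0,0)$ to $(A,B)$; equivalently, after inverting the transformation, $(A,B)$ acquires the pure-gauge form \eqref{eq:pure-gauge-2}. The only remaining technical point is the existence of a smooth solution of the Cauchy problem on all of $U\times[0,1]$: since \eqref{eq:Cauchy-Problem-a} is a first-order ODE system in $t$ with smooth $x$-dependent coefficients that are bounded on any relatively compact neighborhood of a chosen point $p\in U$, standard ODE theory produces such a solution after possibly shrinking $U$ around $p$. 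The main conceptual work, namely verifying the two curvature identities needed to run Voronov's Cauchy-problem argument, has already been carried out in propositions \ref{prop:3.1} and \ref{prop:3.2}, so the theorem follows at once.
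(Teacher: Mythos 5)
Your proof is correct and follows essentially the same route as the paper: both apply the corollary to the homotopy $h(x,t)=xtk(t)$ between the constant map and the identity, note that the $h_0$-pullbacks and the curvature integrals vanish, and read off the pure-gauge form \eqref{eq:pure-gauge-2}. The only addition is your explicit remark on solvability of the Cauchy problem, which the paper leaves implicit.
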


\subsection{Poincar\'e lemma on principal 3-bundles}

An interesting aspect of our proof in the previous section was that it was not necessary to extend the interval $[0,1]$ used in the case of ordinary principal 2-bundles to $[0,1]^2$. The latter arises  if one wants to define the general transport 2-functor from the path 2-groupoid to the delooping of the strict Lie 2-group corresponding to the crossed module $\sH\rightarrow \sG$, cf.\ \cite{Schreiber:0802.0663}.

Therefore, and since all the terms in the formulas contained in our proof  have clear meanings, one can in principle readily generalize our proof to the case of local connective structures on principal 3-bundles. Let us here concisely summarize the steps. 

We start from a local connective structure $(\hat A,\hat B, \hat C)$ on $U\times [0,1]$, where $U$ is a contractible patch of some smooth manifold. Let $\sL\rightarrow \sH\rightarrow \sG$ be the relevant 2-crossed module and $\frl\rightarrow \frh\rightarrow \frg$ the corresponding linearization. The Cauchy problem is again given by equations stating that the components of the connective structures along $\dd t$ can be gauged away. Here, we have
\begin{equation}
 \begin{aligned}
  \dot g&=-\hat A_t g+g\sft(\Lambda_t)~,\\
  \dot \Lambda_x &= g^{-1}\acton \hat B_t+\dd_x\Lambda_t+(g^{-1}A_x g+g^{-1}\dd_x g)\acton \Lambda_t-\sft(\Sigma_t)~,\\
  \dot \Sigma_x &= g^{-1}\acton \hat C_t+\dots~,
 \end{aligned}
\end{equation}
where $\dots$ stands for terms easily read off from equations \eqref{eq:gauge-trafos-3}. As their explicit forms are not illuminating, we suppress them here. This will then lead to statements analogous to propositions \ref{prop:3.1} and \ref{prop:3.2}, which are of the form
\begin{equation}
 \int_0^1 \dd t \der{t}\lrcorner( \tilde{\hat{K}})=\tilde{\hat P}|_{t=1}-\hat{P}|_{t=0}~.
\end{equation}
Here, $\hat P$ is the potential $n$-form for $n=1,2,3$ and $\hat{K}$ is the corresponding curvature $n+1$-form. Furthermore, $\tilde{\hat P}$ and $\tilde{\hat K}$ denote gauge transformed objects.

These equations describe the relation between pullbacks of a local connective structure along homotopic maps. In particular, they imply that the pullbacks of flat local connective structures along homotopic maps are gauge equivalent. Considering again the homotopy $h(x,t):U\times [0,1]\rightarrow U$ with $h(x,t)=xtk(t)$ implies that flat local connective structures are pure gauge.

\section{The Poincar\'e lemma and integrability}\label{sec:integrability}

In the context of integrable systems, we often encounter linear systems of the form
\begin{equation}\label{eq:lin-system}
 \nabla g:=(\dd +A)g=0~,
\end{equation}
where $g$ is a $\sG$-valued function for some matrix Lie group $\sG$, $\dd$ is a differential and $A$ is a $\sLie(\sG)$-valued 1-form. For example, in the Penrose--Ward transform \cite{Ward:1977ta},  $\dd$ is a relative exterior derivative along a fibration and $A$ is a relative differential 1-form. Acting with $\nabla$ on \eqref{eq:lin-system}, we obtain $\nabla(\nabla g)=0$, which is equivalent to $F g:=(\nabla)^2 g=0$. Note that the product in $F g=0$ is just an ordinary matrix product. Multiplying by $g^{-1}$ from the left, we see that the existence of a solution $g$ requires that the curvature $F$ of $\nabla$ vanishes. Moreover, re-arranging equation \eqref{eq:lin-system} directly yields the relation $A=g \dd g^{-1}$, implying $F=0$.

In this section, we demonstrate how these statements translate to linear systems involving connective structures on principal 2-bundles.

\subsection{Underlying 2-term $A_\infty$- and $L_\infty$-algebras}

To write down equation \eqref{eq:lin-system}, it is crucial to have a matrix Lie group such that the expressions $\dd g$ and $A g$ make sense. Analogously, we consider a crossed module of matrix Lie groups $\sH\xrightarrow{\sft}\sG$, which yields matrix products between elements of $\frg:=\sLie(\sG)$ and $\sG$ as well as $\frh:=\sLie(\sH)$ and $\sH$. The Lie brackets are recovered by antisymmetrization of the matrix product. For our construction, we need a further product which turns into the action $\acton:\frg\times \frh\rightarrow \frh$ upon antisymmetrization. As we will explain now, the right context to look for such a product is an associative 2-term $A_\infty$-algebra.

Recall that an associative 2-term $A_\infty$-algebra is a graded vector space $\CA:=\CA_{-1}\oplus \CA_0:=\frh\oplus \frg$ together with ``products'' $m_1:\CA\rightarrow \CA$ and $m_2:\CA^{\otimes 2}\rightarrow \CA$ of degrees $1$ and $0$, respectively, such that
\begin{equation}
 m_1\circ m_1=0~,~~~m_1\circ m_2=m_2\circ(m_1\otimes \unit+\unit\otimes m_1)~,~~~m_2\circ(\unit\otimes m_2-m_2\otimes \unit)=0~.
\end{equation}
The first equation says that $m_1$ is a differential, the second equation states the compatibility between this differential and the product $m_2$ and the third equation implies that the product $m_2$ is associative. We use the usual sign convention for the maps $m_i$:
\begin{equation}
 (m_i\otimes m_j)(a_1\otimes a_2)=(-1)^{\tilde{m}_j\tilde{a}_1}m_i(a_1)\otimes m_j(a_2)~,
\end{equation}
where $\tilde{a}_1$ denotes the total parity of $a_1\in \CA^{\otimes i}$ and $\tilde{m}_j:=2-j$. 

If we antisymmetrize the products $m_i$ to antisymmetric products $\mu_i$, we obtain a 2-term $L_\infty$-algebra, cf.\ \cite{Stasheff:1963aa,Stasheff:1963ab,Stasheff:1992bb}. Associative 2-term $L_\infty$-algebras, in turn, are equivalent to crossed modules of Lie algebras. More explicitly, the map $\sft$ is identified with $m_1=\mu_1$, the commutator on $\frg:=\CA_0$ is given by $\mu_2:\frg\times \frg\rightarrow \frg$, the action of $\frg$ onto $\frh:=\CA_{-1}$ is given by $\mu_2:\frg\times \frh\rightarrow \frh$ and the commutator on $\frh$ is given by $\mu_2\circ(\mu_1\otimes \unit)$. Altogether, we conclude that the higher analogue of demanding a matrix Lie algebra structure instead of merely a Lie algebra structure implies to ask for an $A_\infty$-algebra underlying the $L_\infty$-algebra corresponding to the crossed module of Lie algebras. Finally, we demand that the $A_\infty$-product can be continued to a product between the $A_\infty$-algebra and the crossed module of Lie groups $\sH\xrightarrow{\sft}\sG$, such that we have products
\begin{equation}
 m_2:\CA_0\times \sG\rightarrow \CA_0~,~~~m_2:\CA_0\times\sH\rightarrow \CA_{-1}\eand m_2:\CA_{-1}\times \sG\rightarrow \CA_{-1}~.
\end{equation}
We now arrived at a complete higher analogue of having a matrix Lie group. 

As a non-trivial example for such a structure, consider the crossed module of Lie groups $\sH\xrightarrow{\sft}\sG=\sGL(n,\FC)\xrightarrow{\id} \sGL(n,\FC)$. The action $\acton$ is just the adjoint action, and we define
\begin{equation}
 m_2(a,b):=\left\{\begin{array}{cl}
                   ab & a,b\in \sG\cup \sLie(\sG)~\\
                   ab & a\in \sG\cup \sLie(\sG)~,~~b\in \sH\cup\sLie(\sH)~,\\
                   ab^{-1} &  a\in \sLie(\sH)~,~~b\in \sG~,\\
                   -ab & a\in \sLie(\sH)~,~~b\in \sLie(\sG)~.
                  \end{array} \right.
\end{equation}

It is not clear to us how to construct such an $A_\infty$-algebra for an arbitrary crossed module of Lie algebras $\frh\rightarrow \frg$, but we strongly suspect that there is such a construction. Even if such a construction did not exist, we could impose a restriction to crossed modules admitting such a construction. This set is not empty, as the above example shows. Note that the $A_\infty$-algebra resulting from the construction $\CA=\CA_{-1}\oplus \CA_0$ will contain the crossed module as an 2-term $L_\infty$-subalgebra and therefore might be larger than $\frg\oplus \frh$. 

To deal with connections and their curvatures, we have to allow for differential forms on some contractible region $U$ taking values in the subspace $\frh\oplus \frg$ of the $A_\infty$-algebra $\CA=\CA_{-1}\oplus \CA_0$. Recall that $\Omega^\bullet(U)$ is a differential graded algebra, and there is a natural tensor product between differential graded algebras and $A_\infty$-algebras. This product yields an $A_\infty$-algebra $\tilde \CA:=\Omega^\bullet(U)\otimes \CA$ where the total degree of an element is the sum of the degree in $\CA$ and its form degree. The products are given by
\begin{equation}
 \tilde m_1(a):=\dd a+(-1)^p m_1(a)\eand \tilde m_2=m_2
\end{equation}
for $a\in\Omega^p(U)\otimes\CA$. As a shortcut, we shall write $a*b:=\tilde m_2(a,b)$. To rewrite these products in terms of the maps $\sft$ and $\acton$ of the crossed module which are independent of the form degree, we choose the convention of moving all form degrees to the left. Whenever two odd elements are moved past each other, a sign has to be inserted. For example, we have
\begin{equation}
\begin{aligned}
 A*B+B*A:=&\tilde m_2(A,B)+\tilde m_2(B,A)\\
 =&\dd x^\mu\wedge \dd x^\nu\wedge \dd x^\kappa(m_2(A_\mu,\tfrac12 B_{\nu\kappa})-m_2(\tfrac12 B_{\nu\kappa},A_\mu))\\
 =&\dd x^\mu\wedge \dd x^\nu\wedge \dd x^\kappa(A_\mu \acton \tfrac12 B_{\nu\kappa})\\
 =:&A\acton B~,
\end{aligned}
\end{equation}
where we used some coordinates $(x^\mu)$ on $U$ to illustrate the issue. 

\subsection{Higher flatness as an integrability condition}

The above constructions now suggest a higher generalization of the covariant derivative $\dd+A$ to the operator
\begin{equation}
 \nabla:=\tilde m_1+A*-B*~,
\end{equation}
where we inserted a sign for convenience. This operator will act on formal sums consisting of differential forms with values in $\sG$, $\frg$ and $\frh$. The detailed action is given in the following lemma.
\begin{lemma}\label{lem:higher_nabla}
 For $g\in\Omega^0(U)\otimes \sG$, $X\in \Omega^p(U)\otimes \frg$ and $Y\in\Omega^q(U)\otimes \frh$, we have the following two equations:
 \begin{equation*}
  \begin{aligned}
    \nabla (g+X+Y)&=\dd g+\dd X+\dd Y+(-1)^q\sft(Y)+Ag+AX+A*Y-B*g-B*X~,\\
    \nabla^2(g+X+Y)&=\CF g+\CF X+\CF*Y-H*g-H*X~.
  \end{aligned}
 \end{equation*} 
\end{lemma}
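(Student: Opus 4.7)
The plan is to verify both identities by direct expansion, leveraging the $A_\infty$-compatibilities of $m_1$ and $m_2$; the first identity is essentially bookkeeping, while the second becomes clean after a small abstract rearrangement of $\nabla^2$.

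For the first identity I would apply $\nabla=\tilde m_1+A*-B*$ to each summand of $g+X+Y$. Using $\tilde m_1=\dd+(-1)^p m_1$ together with the fact that $m_1$ vanishes on $\frg$ and equals $\sft$ on $\frh$, the $\tilde m_1$-pieces contribute $\dd g+\dd X+\dd Y+(-1)^q\sft(Y)$. The $A$- and $B$-multiplications produce $A*g+A*X+A*Y-B*g-B*X$; the only non-obvious point is that $B*Y$ lands in $\CA_{-2}=0$ and therefore drops out. Adding the pieces gives precisely the stated formula.

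For the second identity I would introduce $\CB:=A-B$, which is homogeneous of total degree one since both $A$ (form degree $1$, $\CA$-degree $0$) and $-B$ (form degree $2$, $\CA$-degree $-1$) contribute total degree $1$. Writing $\nabla=\tilde m_1+\CB*$ and squaring, the $A_\infty$-identities $\tilde m_1^2=0$, the graded Leibniz rule $\tilde m_1(\CB*\phi)=\tilde m_1(\CB)*\phi-\CB*\tilde m_1(\phi)$ (the minus sign coming from the odd total parity of $\CB$), and the associativity $\CB*(\CB*\phi)=(\CB*\CB)*\phi$ collapse the cross terms and leave $\nabla^2\phi=F*\phi$ with
\begin{equation*}
F:=\tilde m_1(\CB)+\CB*\CB.
\end{equation*}
A direct expansion yields $\tilde m_1(A)=\dd A$, $\tilde m_1(B)=\dd B+\sft(B)$, $A*A=\tfrac12[A,A]$, $B*B=0$ (again on $\CA_{-2}=0$), and $A*B+B*A=A\acton B$ by the convention fixed just before the lemma, so
\begin{equation*}
F=\bigl(\dd A+\tfrac12[A,A]-\sft(B)\bigr)-\bigl(\dd B+A\acton B\bigr)=\CF-H.
\end{equation*}
Applying $F*$ to $g+X+Y$ and once more noting $H*Y\in\CA_{-2}=0$ produces exactly $\nabla^2(g+X+Y)=\CF g+\CF X+\CF*Y-H*g-H*X$.

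The only real obstacle is bookkeeping: one must confirm that the graded Leibniz rule for $\tilde m_1$ with respect to $*$ survives intact once $m_2$ is extended to accept arguments in $\sG$ (so that, for instance, $\tilde m_1(A*g)=\dd A*g-A*\dd g$ and analogously for the $B$-product on $g$), and that the signs from form degree and $\CA$-degree are consistently combined into a total degree throughout. Once those sign conventions are fixed on each type of argument, the computation collapses to the algebraic identity $F=\CF-H$ derived above, and both assertions of the lemma follow.
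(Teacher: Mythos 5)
Your proof is correct and follows essentially the same route as the paper's: both arguments rest on the graded Leibniz rule for $\tilde m_1$, the associativity of $m_2$, and the degree-reason vanishing of $B*Y$, $B*B$ and $H*Y$ in $\CA_{-2}=0$. The only difference is organizational — you first bundle $A-B$ into a homogeneous total-degree-one element $\CB$ and derive the operator identity $\nabla^2=(\tilde m_1(\CB)+\CB*\CB)*\,=(\CF-H)*$ before applying it, whereas the paper expands $\nabla^2(g+X+Y)$ term by term and regroups at the end; the unresolved point you flag (extending the Leibniz rule to arguments in $\sG$) is likewise taken for granted in the paper's computation.
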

\begin{proof}
 The first equation follows directly. To compute the second equation, recall that $\tilde m_1$ satisfies by definition a Leibniz rule $\tilde m_1(a*b):=\tilde m_1(a)*b+(-1)^{\tilde{a}}a*m_1(b)$. We then have:
 \begin{equation}
 \begin{aligned}
  \nabla^2(g+X+Y)\ =\ &\nabla\big(\tilde m_1(g+X+Y)+A*(g+X+Y)-B*(g+X+Y)\big)\\
  =\ & \tilde m_1\big(A*(g+X+Y)-B*(g+X+Y)\big)+\\
  &+A*\tilde m_1(g+X+Y)-B*\tilde m_1(g+X+Y)+\\
  &+A*A*(g+X+Y)-A*B*(g+X+Y)+\\
  &-B*A(g+X)-B*A*Y-B*B*(g+X)\\
  =\ & \tilde m_1(A)*(g+X+Y)-\tilde m_1(B)*(g+X+Y)+\\
  &+A*A*(g+X+Y)-A*B*(g+X)-B*A(g+X)\\
  =\ & \CF(g+X)+\CF*Y-H*(g+X)~,
 \end{aligned}
 \end{equation}
 as claimed.
\end{proof}

We have now everything at our disposal to consider the higher analogue of the linear system \eqref{eq:lin-system} in the context of local connective structures on principal 2-bundles.
\begin{theorem}\label{thm:4.3}
 The equation 
 \begin{equation}\label{eq:lin-system-2}
 \nabla(g-\Lambda*g)=0~~~\mbox{for}~~~g\in \Omega^0(U)\otimes \sG~,~\Lambda\in\Omega^1(U)\otimes \frh
\end{equation}
implies that the local connective structure $(A,B)$ is pure gauge and that the curvature $\nabla^2=(\CF-H)$ vanishes.
\end{theorem}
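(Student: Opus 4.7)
The plan is to unpack the equation $\nabla(g-\Lambda*g)=0$ using Lemma~\ref{lem:higher_nabla}, sort the result by form degree and $A_\infty$-degree (each component must vanish independently), and then apply $\nabla$ once more to read off the vanishing of the curvatures.

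First, with $X=0$ and $Y := -\Lambda*g \in \Omega^1(U)\otimes\frh$, the first formula in Lemma~\ref{lem:higher_nabla} splits $\nabla(g-\Lambda*g)=0$ into an $\Omega^1(U)\otimes\frg$-part and an $\Omega^2(U)\otimes\frh$-part. The compatibility $m_1\circ m_2 = m_2\circ(m_1\otimes\unit+\unit\otimes m_1)$, together with the fact that $m_1$ annihilates $\sG$-valued functions, gives $\sft(\Lambda*g)=\sft(\Lambda)*g$, so the $\frg$-part reads $\dd g + A*g + \sft(\Lambda)*g = 0$. Multiplying by $g^{-1}$ from the right and using $\dd g\cdot g^{-1} = -g\cdot\dd g^{-1}$ yields $A = g^{-1}\dd g - \sft(\Lambda)$ after the standard relabeling $g \mapsto g^{-1}$, recovering the first pure-gauge relation of Theorem~\ref{thm:Poincare-2}.

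For the $\frh$-part, I would expand $\dd(\Lambda*g)$ via the Leibniz rule for $\tilde m_1$, substitute the $\frg$-equation to eliminate $\dd g$, and factor $g$ on the right using associativity of $m_2$, arriving at an equation of the shape
\begin{equation*}
 \bigl(-\dd\Lambda - (A*\Lambda+\Lambda*A) - \Lambda*\sft(\Lambda) - B\bigr)*g\;=\;0~.
\end{equation*}
The antisymmetrization taking $A_\infty$-products to $L_\infty$-brackets identifies $A*\Lambda+\Lambda*A$ with $A\acton\Lambda$ and $\Lambda*\sft(\Lambda)$ with $\tfrac12[\Lambda,\Lambda]$; invertibility of $g$ then gives $B = -\dd\Lambda - A\acton\Lambda - \tfrac12[\Lambda,\Lambda]$, the second pure-gauge relation of Theorem~\ref{thm:Poincare-2}.

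To see that the curvatures vanish, note that $\nabla(g-\Lambda*g)=0$ trivially implies $\nabla^2(g-\Lambda*g)=0$. Feeding $X=0$, $Y=-\Lambda*g$ into the second formula of Lemma~\ref{lem:higher_nabla} and using associativity produces
\begin{equation*}
 \CF*g \;-\;(\CF*\Lambda)*g \;-\;H*g\;=\;0~.
\end{equation*}
Sorting by degree, the $\Omega^2(U)\otimes\frg$-component forces $\CF*g=0$, hence $\CF=0$ by invertibility of $g$. Substituting this into the $\Omega^3(U)\otimes\frh$-component then forces $H*g=0$, hence $H=0$, which is precisely the statement that the operator $\nabla^2=\CF-H$ vanishes.

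The \emph{main obstacle} I anticipate is the careful bookkeeping of signs in $\tilde\CA=\Omega^\bullet(U)\otimes\CA$, particularly verifying that the above $A_\infty$-combinations reduce to the expected $L_\infty$-brackets with the correct signs in the current degree convention, and handling rigorously the extension of $m_1$ and $m_2$ to the groups $\sG$ and $\sH$ so that the identities $\sft(\Lambda*g)=\sft(\Lambda)*g$ and $A*(\Lambda*g)=(A*\Lambda)*g$ can be used without spurious corrections.
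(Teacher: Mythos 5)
Your proposal is correct and follows essentially the same route as the paper: apply Lemma \ref{lem:higher_nabla} with $X=0$, $Y=-\Lambda*g$, split by form degree to read off the two pure-gauge relations, and use $\nabla^2(g-\Lambda*g)=0$ (or simply the fact that pure gauge implies flat) to conclude $\CF=H=0$. The only bookkeeping you leave open is settled in the paper by deriving $\Lambda*\sft(\Lambda)=-\tfrac12[\Lambda,\Lambda]$ from the derivation property of $\sft$ with respect to $m_2$ together with the Peiffer identity, after first eliminating $\dd g$ via the degree-one equation $\dd g+Ag+\sft(\Lambda)g=0$.
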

\begin{proof}
 Using lemma \ref{lem:higher_nabla} with $X=0$ and $Y=-\Lambda*g$, we obtain
 \begin{equation}
  \nabla(g-\Lambda*g)=\dd g-(\dd \Lambda)*g-\Lambda*(\dd g)+\sft(\Lambda)g+Ag-A*\Lambda*g-B*g=0~.
 \end{equation}
 We can split this equation by form degree into
 \begin{equation}\label{eq:inbetween}
  \begin{aligned}
    0&=\dd g+Ag+\sft(\Lambda)g~,\\
    B*g&=(-\dd \Lambda-A\acton \Lambda)*g-\Lambda*(\dd g+Ag)~.
  \end{aligned}
 \end{equation}
 The first equation states that $A$ is pure gauge. Note that $\Lambda*\sft(\Lambda)=-\tfrac12[\Lambda,\Lambda]$, which is due to 
 \begin{equation}
  \begin{aligned}
   \Lambda*\sft(\Lambda)&=\tfrac12\big(\Lambda*\sft(\Lambda)+\Lambda*\sft(\Lambda)-\sft(\Lambda*\Lambda)\big)\\
   &=\tfrac12\big(\Lambda*\sft(\Lambda)-\sft(\Lambda)*\Lambda)=-\tfrac12 \sft(\Lambda)\acton \Lambda\\
   &=-\tfrac12 [\Lambda,\Lambda]~,
  \end{aligned}
 \end{equation}
 where we use the fact that $\sft$ is a derivation with respect to $m_2$ and the Peiffer identity. Using this identity together with $Y*g*g^{-1}=Y$, we can reformulate the second equation in \eqref{eq:inbetween} as 
 \begin{equation}
  B=-\dd \Lambda-A\acton \Lambda-\tfrac12[\Lambda,\Lambda]~,
 \end{equation}
 and the total local connective structure $(A,B)$ is pure gauge. A local connective structure which is pure gauge is clearly flat. Equivalently, lemma \ref{lem:higher_nabla} implies that $0=\nabla^2(g-\Lambda*g)=\CF g-\CF*\Lambda*g+H*g$ and therefore leads to the same conclusion.
\end{proof}

Altogether, we saw how the usual solution and integrability condition for the linear system \eqref{eq:lin-system} can be translated to the categorified case \eqref{eq:lin-system-2} by means of an associative 2-term $A_\infty$-algebra.

Finally, let us comment on the case of principal 3-bundles. Again, the extension of the discussion in the previous section to the case of local connective structures on principal 3-bundles is more or less a mere technicality. One starts from an associative 3-term $A_\infty$-algebra whose products extends to a 2-crossed module of matrix Lie groups. The covariant derivative is extended by adding a 3-form potential and the generalizations of the linear system \eqref{eq:lin-system-2} is rather straightforward. The same holds for the derivation of the analogous statements to theorem \ref{thm:4.3}

\section*{Acknowledgments}

We would like to thank Heiko Gimperlein for discussions. We are also very grateful to an anonymous referee whose extraordinarily detailed and very helpful comments led to many improvements. This work was supported by the EPSRC Career Acceleration Fellowship EP/H00243X/1.

\appendices
 
\subsection{Higher distributions leading to differential ideals}\label{app:HigherDistributions}

In this appendix, we briefly present a relation between certain higher distributions and differential ideals, generalizing the correspondence between ordinary involutive distributions and differential ideals generated by 1-forms. This is a first step towards a generalized Frobenius theorem.

Recall that a distribution $\CCD$ is a smoothly varying family of subspaces $\CCD_x$ of the fibers $T_xM$ of the tangent bundle of some manifold $M$. It is involutive if the Lie algebra of vector fields closes on sections of $\CCD$. That is, for any point $p\in M$, there is a neighborhood $U_p$ and vector fields $X_1,\ldots, X_r\in \frX(U_p)$ such that the $X_i$ are linearly independent and at each point $x\in U_p$, $\CCD_x$ is spanned by the $X_i$. Extending these vector fields to a local basis $X_1,\ldots,X_d$ of $TM$, we have 
\begin{equation}\label{eq:integrability}
 [X_i,X_j]=f^k_{ij} X_k\ewith f^{\underline k}_{\bar i\bar j}=0
\end{equation}
where the $f^k_{ij}$ are functions on $U_p$ and overlined and underlined indices $\bar{i}$ and $\underline{i}$ denote indices $i\leq r$ and $i>r$, respectively. 

Recall that by the Frobenius theorem, such an involutive distribution induces a regular foliation of the manifold $M$.

The Lie algebra of vector fields in \eqref{eq:integrability} has a dual Chevalley--Eilenberg algebra, which is encoded in the relations
\begin{equation}
 \dd \theta^k=-\tfrac12 f^k_{ij} \theta^i\wedge \theta^j~,
\end{equation}
where the 1-forms $\theta^i$ locally span $T^*M$ and satisfy $\theta^i(X_j)=\delta^i_j$. Note that because of $f^{\underline k}_{\bar i\bar j}=0$, the 1-forms $\theta^{\underline i}$ form a differential ideal.

This yields the modern formulation of the Frobenius theorem, which states that for a differential ideal on a manifold $M$ which is generated by 1-forms, there are submanifolds $e:N_p\embd M$ for each point $p\in M$ such that $p\in N_p$ and $e^* \alpha=0$ for any $\alpha $ in the differential ideal.

Let us now generalize the correspondence between certain distribution and differential ideals. We start by recalling some basic facts on multivector fields.

Consider a patch $U$ of a $d$-dimensional manifold $M$ together with the set of multivector fields $\frX^\bullet(U):=\Gamma(TU)\oplus \Gamma(\wedge^2TU)\oplus \cdots\oplus \Gamma(\wedge^d TU)$. On $\frX^\bullet(U)$, there is a natural generalization of the Lie bracket, which fulfills the Leibniz rule with respect to the $\wedge$-product: 
\begin{definition}
 The \uline{Schouten--Nijenhuis bracket} is the bilinear extension to $\frX^\bullet(U)$ of
\begin{equation}
\begin{aligned}
 &[V_1\wedge \cdots \wedge V_m,W_1\wedge \cdots \wedge W_n]_S:=\\
 &\hspace{1.5cm}\sum_{i,j=1}^{m,n}(-1)^{i+j}[V_i,W_j]\wedge V_1\wedge \cdots \wedge \hat{V}_i\wedge \cdots \wedge V_m\wedge W_1\wedge \cdots \wedge \hat{W}_j\wedge \cdots \wedge W_n~,
\end{aligned}
\end{equation}
where $V_i$, $W_j\in \frX^1(U)$ and $\hat{\cdot}$ indicates an omission.
\end{definition}

Note that the Schouten--Nijenhuis bracket turns the complex $\frX^\bullet(U)$ into a graded Lie algebra $L_0$. This graded Lie algebra has a dual Chevalley--Eilenberg algebra description in terms of forms in $\Omega^\bullet(U)$. Given a local basis $\theta^i,\xi^a,\ldots$ of linearly independent 1-forms, 2-forms, ..., spanning $T^*_xU$, $\wedge^2 T^*_xU$, $\ldots$ at every $x\in U$ we have
\begin{equation}
\dd \theta^i=-\tfrac12 f^i_{jk} \theta^j\wedge \theta^k~,~~~\dd \xi^a =-d^a_{ib}\theta^i\wedge \xi^b~,~~~\ldots~,
\end{equation}
where the $f^i_{jk}$ are the structure constants of the Lie algebra of vector fields and the additional structure constants $d^a_{ib}$ are functions on $U$ determined by the $f^i_{jk}$. As the $\theta^i,~\xi^a~,\ldots$ form a complete basis, we can also write these relations as
\begin{equation}\label{eq:Schouten-SHL}
\begin{aligned}
 \dd \theta^i&=-\tfrac12 \tilde{f}^i_{jk} \theta^j\wedge \theta^k+ \tilde{t}^i_a \xi^a~,\\
 \dd \xi^a&=-\tilde{d}^a_{ib} \theta^i\wedge \xi^b-\tfrac{1}{3!} \tilde{c}^a_{ijk}\theta^i\wedge \theta^j\wedge \theta^k~,\\
 \ldots
\end{aligned}
\end{equation}
where $\xi^a=m^a_{ij} \theta^i\wedge \theta^j$ and 
\begin{equation}
 f^i_{jk}=\tilde f^i_{jk}+\tilde t^i_a m^a_{jk}~,~~~ d^a_{ib}m^b_{jk}=\tilde d^a_{ib}m^b_{jk}+\tilde c^a_{ijk}~,~~~\ldots
\end{equation}
Equation \eqref{eq:Schouten-SHL} describes the Chevalley--Eilenberg algebra of a strong homotopy Lie algebra\footnote{See \cite{Lada:1992wc,0821843621} for a definition and more details.}. 
\begin{proposition}
The tilded structure constants in \eqref{eq:Schouten-SHL} define a strong homotopy Lie algebra on the graded vector space of multivector fields $\frX^\bullet(U)$.  
\end{proposition}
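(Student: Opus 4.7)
The plan is to work in the dual Chevalley--Eilenberg picture: a strong homotopy Lie structure on a graded vector space $V$ is equivalent to a degree-$+1$ derivation $Q$ on the free graded-commutative algebra $\mathrm{Sym}^{\bullet}(V^*[1])$ satisfying $Q^2=0$, with the $n$-ary bracket $\tilde\mu_n$ recovered as the arity-$n$ component of $Q$. Under this correspondence, verifying the proposition amounts to exhibiting such a nilpotent $Q$ on the free algebra generated by the duals of multivector fields and reading off the brackets from its arity decomposition.

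First I would build the candidate CE complex: take the free graded-commutative algebra $\CA$ on a local basis of dual generators $\theta^i,\,\xi^a,\,\ldots$ (dual to $\frX^1(U),\,\frX^2(U),\,\ldots$), with the degrees dictated by the standard shift, and define a degree-$+1$ derivation $Q$ by declaring its action on generators via the right-hand sides of \eqref{eq:Schouten-SHL} (and the analogous decompositions on the higher generators). The tilded constants are by construction the coefficients in the expansion of $d_{\mathrm{dR}}\theta^i,\,d_{\mathrm{dR}}\xi^a,\,\ldots$ in the chosen monomial basis built from ``old'' versus ``new'' generators, so there is a canonical surjective graded-algebra map $\pi:\CA\twoheadrightarrow\Omega^\bullet(U)$, $\xi^a\mapsto m^a_{ij}\theta^i\wedge\theta^j$, etc., intertwining $Q$ with $d_{\mathrm{dR}}$; this is exactly what the relations $f^i_{jk}=\tilde f^i_{jk}+\tilde t^i_a m^a_{jk}$, $d^a_{ib}m^b_{jk}=\tilde d^a_{ib}m^b_{jk}+\tilde c^a_{ijk}$, $\ldots$ amount to.

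Next I would establish $Q^2=0$. Since $Q^2$ is a graded derivation of degree $+2$, vanishing on generators implies $Q^2\equiv 0$, so the task reduces to the single-generator identities $Q^2(\theta^i)=0$, $Q^2(\xi^a)=0$, $\ldots$. Each of these is a polynomial in the tilded constants which decomposes, by arity of its monomials, into independent pieces; the arity-$n$ piece is exactly the $n$-th generalized Jacobi identity for $\tilde\mu_1,\tilde\mu_2,\ldots,\tilde\mu_{n+1}$. Applying $\pi$ sends $Q^2(\theta^i)$ to $d_{\mathrm{dR}}^2(\theta^i)=0$, and similarly on higher generators; because distinct arity components correspond to monomials in disjoint families of generators (e.g.\ $\theta\wedge\theta\wedge\theta$ versus $\theta\wedge\xi$ versus a new cubic generator at the three-form level), and these live in linearly independent summands of $\CA$ before any identification, the individual arity components must vanish separately. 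Translating back, this yields the full tower of homotopy Jacobi identities for the brackets $\tilde\mu_n$ reconstructed from $\tilde f^i_{jk},\,\tilde t^i_a,\,\tilde d^a_{ib},\,\tilde c^a_{ijk},\,\ldots$.

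The main obstacle is this arity-separation step: strictly speaking, one must confirm that the combinatorial expansion of $Q^2$ on, say, $\xi^a$ does not mix arities in a way that exploits the redundancy encoded by the $m^a_{ij}$. The cleanest way to handle this is to view \eqref{eq:Schouten-SHL} as the data of an orthogonal splitting $\Omega^{k}(U)\cong \bigl(\text{wedge products of lower generators}\bigr)\oplus \bigl(\text{span of new generators of arity $k$}\bigr)$, which makes each arity component of $Q^2$ a well-defined, basis-independent object; the tower of generalized Jacobi identities then falls out uniformly from $d_{\mathrm{dR}}^2=0$. Carrying this book-keeping out consistently at every generator of $\frX^\bullet(U)$, and in particular verifying that the graded-symmetry/antisymmetry conventions match those required of an $L_\infty$-algebra, is the bulk of the technical work.
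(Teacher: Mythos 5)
The paper states this proposition without any proof, so there is no in-text argument to compare yours against; your dual Chevalley--Eilenberg strategy is certainly the natural route. However, the step you yourself flag as ``the main obstacle'' is a genuine gap, and the fix you sketch does not close it. The surjection $\pi:\CA\twoheadrightarrow\Omega^\bullet(U)$ intertwining $Q$ with $\dd$ only gives $\pi(Q^2\theta^i)=\dd^2\theta^i=0$, i.e.\ $Q^2\theta^i\in\ker\pi$; to conclude $Q^2\theta^i=0$ in the free algebra $\CA$ you would need $\pi$ to be injective on the relevant graded piece, and it is not. The arity components of $Q^2\theta^i$ are indeed linearly independent \emph{in $\CA$}, but that is not what you need: you need their \emph{images under $\pi$} to be linearly independent in $\Omega^3(U)$, and they are not, precisely because $\pi(\xi^a)=m^a_{jk}\theta^j\wedge\theta^k$ identifies the arity-two monomials $\theta\wedge\xi$ (and the images of any cubic generators) with linear combinations of the arity-three monomials $\theta\wedge\theta\wedge\theta$. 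Consequently $\dd^2=0$ yields only one relation per generator --- the vanishing of the \emph{sum} of all arity components after substitution --- which is just the ordinary untilded Jacobi/Bianchi identity, not the separate tower of homotopy Jacobi identities for $\tilde t^i_a,\tilde f^i_{jk},\tilde d^a_{ib},\tilde c^a_{ijk},\ldots$. The ``orthogonal splitting'' you invoke cannot exist in the form you need, since the new generators $\xi^a,\ldots$ span (a subspace of) the same space as the wedge products of the old generators rather than a complement of it; if they spanned a genuine complement, all $m^a_{ij}$ would vanish and the tilded constants would coincide with the untilded ones, trivializing the statement.

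What is actually required is an argument that the tilded constants can be \emph{chosen}, order by order in arity, so that each component of $Q^2$ vanishes separately --- for instance a homotopy-transfer-style construction along the identification $\xi^a=m^a_{ij}\theta^i\wedge\theta^j$, or a direct verification of the generalized Jacobi identities using properties of the Schouten--Nijenhuis bracket. Note that the proposition quantifies over \emph{a} rewriting of the form \eqref{eq:Schouten-SHL}, and an arbitrary solution of the constraints $f^i_{jk}=\tilde f^i_{jk}+\tilde t^i_a m^a_{jk}$, etc.\ will not automatically satisfy even the lowest identities (e.g.\ $\mu_1\circ\mu_1=0$ and the compatibility of $\mu_1$ with $\mu_2$), so an existence or selection argument is unavoidable. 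As it stands, the only identity your proof extracts is $\dd^2=0$ itself, which carries no information about the tilded constants beyond the single combined relation per generator.
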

\noindent In particular, in terms of a basis $X_i\in \frX^1(U)$, $Y_a\in \frX^2(U)$, ... dual to that of $\Omega^\bullet(U)$ used above, we have the following higher brackets:
\begin{equation}\label{eq:associated_SHL}
\begin{aligned}
 \mu_1(Y_a)&=\tilde t_a^i X_i~,~~~&\mu_2(X_i,X_j)&=\tilde f^k_{ij} X_k~,\\
 \mu_2(X_i,Y_a)&=\tilde d^b_{ia} Y_b~,~~~&\mu_3(X_i,X_j,X_k)&=\tilde c^a_{ijk} Y_a~,\\
 \ldots
\end{aligned}
\end{equation}
The two underlying Chevalley--Eilenberg complexes of the Lie algebra $L_0$ given by the Schouten--Nijenhuis bracket and any $L_\infty$-algebra on $\frX^\bullet(U)$ given by a rewriting as in \eqref{eq:Schouten-SHL} are essentially identical. Therefore, there is an $L_\infty$-algebra isomorphisms between these, which motivates the following definition.
\begin{definition}
 An \uline{$L_\infty$-algebra associated to the Lie algebra $L_0$} is an $L_\infty$-algebra-structure on $\frX^\bullet(U)$ with higher brackets as in \eqref{eq:associated_SHL} obtained by a rewriting of the underlying Cheval\-ley--Eilenberg algebra of $L_0$ as in \eqref{eq:Schouten-SHL}. 
\end{definition}

Finally, note that we can truncate the structures introduced above from $\frX^\bullet(U)$ to multivector fields of a maximal degree $n$. In particular, we can evidently truncate the Schouten--Nijenhuis bracket to the complex 
\begin{equation}\label{eq:complex-Xi}
\frX_{(n)}(U)=
TU\ \longleftarrow\ \wedge^2 TU\ \longleftarrow\ \wedge^3 TU\ \longleftarrow\  \cdots \ \longleftarrow\ \wedge^n TU
\end{equation}
by setting
\begin{equation}
  [X_1\wedge \cdots \wedge X_p,Y_1\wedge \cdots \wedge Y_q]:=0
\end{equation}
for $X_i,Y_i\in \frX^1(U)$ and $p+q>n+1$. The associated $L_\infty$-algebras come then with higher brackets satisfying
\begin{equation}
  \mu_k(X_1,\ldots, X_k):=0
\end{equation}
for homogeneously graded $X_i\in \frX^{|X_i|}\subset \frX_{(n)}(U)$ and $k>n+1$ or $|X_1|+\cdots |X_k|>n+1$.

\

We now come to a generalization of the notion of distribution based on multivector fields.
\begin{definition}
 An \uline{$n$-distribution} on a $d$-dimensional manifold $M$ with $n\leq d$ is a sequence of distributions $\CCD=(\CCD_1,\ldots,\CCD_n)$ such that $\CCD_i$ is a distribution in $\wedge^i TM$.
\end{definition}
\noindent The notion of a pre-involutive distribution is now defined as follows:
\begin{definition}
 An $n$-distribution $\CCD$ on a manifold $M$ is called \uline{pre-involutive}, if there is an $L_\infty$-algebra associated to $L_0$, which closes on $\CCD$.
\end{definition}
\noindent In the case $n=1$, the above two definitions trivially reduce to those of an ordinary distribution and an ordinary involutive distribution.

In the following, let again $X_i\in \frX^1(U)$, $Y_a\in \frX^2(U)$, ... form a local basis spanning $TU$, $\wedge^2 TU$, ... and let $X_i$, $i\leq r_1$, $Y_a$, $a\leq r_2$, ... span a pre-involutive $n$-distribution $\CCD=(\CCD_1,\CCD_2,\ldots,\CCD_n)$. We shall again underline indices larger than $r_i$ and overline indices that are less or equal to $r_i$.  Using this notation, we can characterize the structure constants of $L_\infty$-algebras on pre-involutive $n$-distributions in more detail.
\begin{lemma}\label{lem:structure-constant-restrictions}
 The closure of an $L_\infty$-algebra associated to $L_0$ on a pre-involutive $n$-distri\-bution is equivalent to its structure constants $s^\alpha_{\beta_1\cdots\beta_k}=(\tilde t^i_a, \tilde f^k_{ij},\tilde d^b_{ia}, \tilde c^a_{ijk},\ldots)$ satisfying
 \begin{equation}
  s^{\underline \alpha}_{\overline \beta_1\cdots\overline \beta_k}=0~.
 \end{equation}
\end{lemma}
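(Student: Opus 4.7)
The statement is a coordinate reformulation of the definition of pre-involutivity, so the plan is to unpack both sides in an adapted local basis and compare. First I would fix a local frame $(X_i, Y_a, \ldots)$ of the truncated complex $\frX_{(n)}(U)$ adapted to $\CCD$, so that the overlined subfamilies $\{X_{\bar i}\}_{\bar i\leq r_1}$, $\{Y_{\bar a}\}_{\bar a\leq r_2}$, and so on, pointwise span $\CCD_1, \CCD_2, \ldots$ Denote the full basis collectively by $\{e_\alpha\}$, graded by multivector degree. The brackets of any associated $L_\infty$-algebra, read off from \eqref{eq:associated_SHL}, decompose as
\begin{equation*}
 \mu_k(e_{\bar\beta_1},\ldots,e_{\bar\beta_k})\ =\ s^{\bar\alpha}_{\bar\beta_1\cdots\bar\beta_k}\,e_{\bar\alpha}\ +\ s^{\underline\alpha}_{\bar\beta_1\cdots\bar\beta_k}\,e_{\underline\alpha}~.
\end{equation*}

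For the forward implication, pre-involutivity requires that $\mu_k$ sends $k$-tuples of overlined basis elements into sections of $\CCD$, hence into combinations of overlined basis elements only. Since the $e_{\bar\alpha}$ and $e_{\underline\alpha}$ are pointwise linearly independent, the coefficient of each $e_{\underline\alpha}$ has to vanish, giving $s^{\underline\alpha}_{\bar\beta_1\cdots\bar\beta_k}=0$ at each point of $U$. For the converse, assume all such structure constants vanish. On all-overlined basis arguments, $\mu_k$ then returns combinations of overlined basis elements, hence sections of $\CCD$. Extending by the $C^\infty(U)$-valued coefficients used to form general sections of $\CCD$ and invoking multilinearity of $\mu_k$ yields closure on arbitrary sections.

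The only point to check carefully, rather than any nontrivial obstacle, is the behaviour of the Leibniz-type corrections arising from the fact that the associated $L_\infty$-brackets inherit from the Schouten--Nijenhuis bracket the lack of $C^\infty$-linearity in each slot. These corrections reduce, just as in the classical identity $[fX,Y]=f[X,Y]-(Yf)X$ that underlies ordinary Frobenius, to extra terms proportional to the overlined basis elements already present on the left-hand side, so they automatically remain in $\CCD$. Everything else is purely combinatorial bookkeeping with indices and the stated equivalence follows.
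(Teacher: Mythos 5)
The paper does not actually prove this lemma: it is stated without proof, as an immediate unpacking of the definition of pre-involutivity, and your first two paragraphs supply precisely that intended argument. The forward direction (pointwise linear independence of the adapted frame $e_{\bar\alpha},e_{\underline\alpha}$ forces $s^{\underline\alpha}_{\overline\beta_1\cdots\overline\beta_k}=0$) and the frame-level converse are correct, and they match what the authors must have had in mind, since the lemma literally identifies ``closure'' with a condition on the structure constants computed in the basis adapted to $\CCD$.

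Your third paragraph, however, over-claims on the one point that is genuinely delicate. The assertion that the Leibniz-type corrections are ``proportional to the overlined basis elements already present on the left-hand side'' is the classical $n=1$ fact, but it does not persist for $n\geq 2$. For the Schouten--Nijenhuis bracket one has, for a vector field $X$ and a $p$-vector $P$ with $p\geq 2$,
\begin{equation*}
 [fX,P]_S\ =\ f\,[X,P]_S\ -\ X\wedge\big(\iota_{\dd f}P\big)~,
\end{equation*}
and the correction $X\wedge(\iota_{\dd f}P)$ is a $p$-vector built by wedging $X$ with a partial contraction of $P$; for a general $n$-distribution this need not be a section of $\CCD_p$ even when $X\in\Gamma(\CCD_1)$ and $P\in\Gamma(\CCD_p)$, because nothing in the definition guarantees anything like $\CCD_1\wedge\CCD_{p-1}\subset\CCD_p$. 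The same phenomenon propagates to the brackets $\mu_k$ of an associated $L_\infty$-algebra whenever an argument of non-top degree in the slot is rescaled (e.g.\ $\mu_2(fX_{\bar i},Y_{\bar a})$ acquires a term $-X_{\bar i}\wedge\iota_{\dd f}Y_{\bar a}$). So either ``closes on $\CCD$'' is to be read at the level of the adapted frame, or with the $\mu_k$ extended $C^\infty(U)$-multilinearly --- in which case your last paragraph is unnecessary and the lemma reduces to your first two paragraphs --- or the passage from frame elements to arbitrary sections requires an argument, or an extra compatibility hypothesis on $\CCD$, that neither you nor the paper supplies. You should delete the appeal to the classical identity and either state explicitly which reading of ``closure'' you adopt or isolate the needed condition on $\CCD$.
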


Let us now switch to the dual picture and consider the Chevalley--Eilenberg description of the above $n$-term $L_\infty$-algebra. That is, we have a local basis of forms $\theta^i\in \Omega^1(U)$, $\xi^a\in \Omega^2(U)$, ... with $\di_{X_i}\theta^j=\delta_i^j$, $\di_{Y_a}\xi^b=\delta_a^b$, etc. Closure of an associated $L_\infty$-algebra on a pre-involutive $n$-distribution amounts here to the following:
\begin{theorem}
 The forms $\theta^{\underline i},\xi^{\underline a},\ldots$ spanning the annihilators of the distributions contained in a pre-involutive $n$-distribution generate a differential ideal.
\end{theorem}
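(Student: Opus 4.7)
The plan is to show directly that the exterior-derivative image of each generator of the alleged ideal lies back in the ideal, so that the ideal is differentially closed. Let $\CCI$ denote the ideal in $\Omega^\bullet(U)$ generated by the forms $\theta^{\underline i}, \xi^{\underline a}, \ldots$. Since $\CCI$ is by construction an ideal with respect to $\wedge$, it suffices to verify $\dd \theta^{\underline i}\in \CCI$, $\dd\xi^{\underline a}\in\CCI$, and so on for each underlined basis generator of degree up to $n$.

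The key observation is that the Chevalley--Eilenberg equations \eqref{eq:Schouten-SHL} already display $\dd$ of any basis form as a linear combination of wedge products of basis forms weighted by the structure constants $s^\alpha_{\beta_1\cdots\beta_k}=(\tilde t^i_a,\tilde f^k_{ij},\tilde d^b_{ia},\tilde c^a_{ijk},\ldots)$. Thus first I would write out $\dd\theta^{\underline i}$ as
\begin{equation*}
\dd\theta^{\underline i}=-\tfrac12\tilde f^{\underline i}_{jk}\,\theta^j\wedge\theta^k+\tilde t^{\underline i}_a\,\xi^a,
\end{equation*}
and analogously $\dd\xi^{\underline a}=-\tilde d^{\underline a}_{ib}\,\theta^i\wedge\xi^b-\tfrac1{3!}\tilde c^{\underline a}_{ijk}\,\theta^i\wedge\theta^j\wedge\theta^k$, and more generally the CE expression for $\dd$ applied to any degree-$p$ underlined generator.

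Second, I would split each of these sums according to whether each lower index is overlined or underlined. By lemma \ref{lem:structure-constant-restrictions}, every structure constant with an underlined upper index and all lower indices overlined vanishes, i.e.\ $s^{\underline\alpha}_{\overline\beta_1\cdots\overline\beta_k}=0$. Consequently, every non-zero term in $\dd\theta^{\underline i}$, $\dd\xi^{\underline a}$, $\ldots$ must contain at least one factor $\theta^{\underline j}$, $\xi^{\underline b}$, $\ldots$ among the wedge factors, and is therefore manifestly an element of $\CCI$. This yields the desired inclusion $\dd\CCI\subset\CCI$.

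The main potential obstacle is not conceptual but notational: one has to convince oneself that the pattern survives to all degrees, i.e.\ that the CE dual of an $L_\infty$-algebra truncated at degree $n$ still writes $\dd$ of every underlined basis form as a polynomial in basis forms whose coefficients are precisely the tilded structure constants of lemma \ref{lem:structure-constant-restrictions}. This is automatic from the definition of the associated $L_\infty$-algebra and the construction in \eqref{eq:Schouten-SHL}--\eqref{eq:associated_SHL}, so the argument above closes the proof once that bookkeeping is spelled out for a general homogeneous generator.
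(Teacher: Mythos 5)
Your proposal is correct and follows essentially the same route as the paper: both write the Chevalley--Eilenberg relations $\dd\omega^{\underline\alpha}=\sum_k s^{\underline\alpha}_{\beta_1\cdots\beta_k}\omega^{\beta_1}\wedge\cdots\wedge\omega^{\beta_k}$ and invoke Lemma \ref{lem:structure-constant-restrictions} to conclude that every surviving term carries at least one underlined factor, hence lies in the ideal. The only difference is presentational --- you spell out the low-degree cases $\theta^{\underline i}$, $\xi^{\underline a}$ explicitly, while the paper works directly with a uniform basis $\omega^\alpha$ --- so no further comment is needed.
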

\begin{proof}
 The Chevalley--Eilenberg description of the $L_\infty$-algebra associated to $L_0$ is of the form
 \begin{equation}
  \dd \omega^\alpha=\sum_k s^\alpha_{\beta_1\cdots\beta_k} \omega^{\beta_1}\wedge \ldots \wedge \omega^{\beta_k}~
 \end{equation}
 for general forms $\omega^\alpha\in \Omega^1(U)\oplus \cdots\oplus\Omega^n(U)$. With Lemma \eqref{lem:structure-constant-restrictions}, we conclude that
 \begin{equation}
  \dd \omega^{\underline \alpha}=\sum_k s^{\underline \alpha}_{\underline \beta_1\beta_2\cdots\beta_k} \omega^{\underline \beta_1}\wedge \omega^{\beta_2}\wedge\ldots \wedge \omega^{\beta_k}~,
 \end{equation}
 which states that the $\omega^{\underline \alpha}$ generate a differential ideal.
\end{proof}

Note that in the case $n=1$, this is just the familiar statement that the annihilator of an integrable distribution spans a differential ideal.
 
\bibliography{bigone}

\begin{thebibliography}{10}

\bibitem{Baez:2004in}
J.~C.~Baez and U.~Schreiber,
{\em Higher gauge theory: 2-connections on 2-bundles,}
{\tt \href{http://www.arxiv.org/abs/hep-th/0412325}{hep-th/0412325}}.
%%CITATION = HEP-TH/0412325;%%

\bibitem{Sati:0801.3480}
H.~Sati, U.~Schreiber, and J.~Stasheff,
{\em $L_\infty$-algebra connections and applications to String- and
  Chern-Simons $n$-transport,}
in: ``Quantum Field Theory,'' eds. B. Fauser, J. Tolksdorf and E. Zeidler, p.
  303, Birkhauser 2009
[{\tt \href{http://www.arxiv.org/abs/0801.3480}{0801.3480 [math.DG]}}].

\bibitem{Baez:2010ya}
J.~C.~Baez and J.~Huerta,
{\em {An invitation to higher gauge theory},}
\href{http://dx.doi.org/10.1007/s10714-010-1070-9}{Gen. Relativ. Gravit. {\bf
  43} (2011) 2335} [{\tt \href{http://www.arxiv.org/abs/1003.4485}{1003.4485
  [hep-th]}}].
%%CITATION = ARXIV:1003.4485;%%

\bibitem{Witten:1995zh}
E.~Witten,
{\em {Some comments on string dynamics},}
proceedings of ``Strings ‘95'', USC, 1995
[{\tt \href{http://www.arxiv.org/abs/hep-th/9507121}{hep-th/9507121}}].
%%CITATION = HEP-TH/9507121;%%

\bibitem{Saemann:2012uq}
C.~Saemann and M.~Wolf,
{\em {Non-abelian tensor multiplet equations from twistor space},}
\href{http://dx.doi.org/10.1007/s00220-014-2022-0}{Commun. Math. Phys. {\bf
  328} (2014) 527} [{\tt \href{http://www.arxiv.org/abs/1205.3108}{1205.3108
  [hep-th]}}].
%%CITATION = ARXIV:1205.3108;%%

\bibitem{Saemann:2013pca}
C.~Saemann and M.~Wolf,
{\em {Six-dimensional superconformal field theories from principal 3-bundles
  over twistor space},}
\href{http://dx.doi.org/10.1007/s11005-014-0704-3}{Lett. Math. Phys. {\bf 104}
  (2014) 1147} [{\tt \href{http://www.arxiv.org/abs/1305.4870}{1305.4870
  [hep-th]}}].
%%CITATION = ARXIV:1305.4870;%%

\bibitem{Jurco:2014mva}
B.~Jurco, C.~Saemann, and M.~Wolf,
{\em {Semistrict higher gauge theory},}
\href{http://dx.doi.org/10.1007/JHEP04(2015)087}{JHEP {\bf 1504} (2015) 087}
  [{\tt \href{http://www.arxiv.org/abs/1403.7185}{1403.7185 [hep-th]}}].
%%CITATION = ARXIV:1403.7185;%%

\bibitem{jacobowitz1978}
H.~Jacobowitz,
{\em The Poincar{\'e} lemma for $\dd\omega =F(x,\omega)$,}
\href{http://projecteuclid.org/euclid.jdg/1214434604}{J. Diff. Geom. {\bf 13}
  (1978) 361}.

\bibitem{Voronov:0905.0287}
T.~Voronov,
{\em On a non-Abelian Poincar\'e lemma,}
\href{http://dx.doi.org/10.1090/S0002-9939-2011-11116-X}{Proc. AMS {\bf 140}
  (2012) 2855} [{\tt \href{http://www.arxiv.org/abs/0905.0287}{0905.0287
  [math.DG]}}].

\bibitem{Igusa:0912.0249}
K.~Igusa,
{\em Iterated integrals of superconnections,}
{\tt \href{http://www.arxiv.org/abs/0912.0249}{0912.0249 [math.AT]}}.

\bibitem{MR1083148}
R.~L.~Bryant, S.~S.~Chern, R.~B.~Gardner, H.~L.~Goldschmidt, and
  P.~A.~Griffiths,
{\em Exterior differential systems,}
Springer-Verlag, New York, 1991.

\bibitem{Breen:math0106083}
L.~Breen and W.~Messing,
{\em Differential geometry of gerbes,}
\href{http://dx.doi.org/10.1016/j.aim.2005.06.014}{Adv. Math. {\bf 198} (2005)
  732} [{\tt
  \href{http://www.arxiv.org/abs/math.AG/0106083}{math.AG/0106083}}].

\bibitem{Aschieri:2003mw}
P.~Aschieri, L.~Cantini, and B.~Jur\v{c}o,
{\em Nonabelian bundle gerbes, their differential geometry and gauge theory,}
\href{http://dx.doi.org/10.1007/s00220-004-1220-6}{Commun. Math. Phys. {\bf
  254} (2005) 367} [{\tt
  \href{http://www.arxiv.org/abs/hep-th/0312154}{hep-th/0312154}}].
%%CITATION = HEP-TH/0312154;%%

\bibitem{Bartels:2004aa}
T.~Bartels,
{\em Higher gauge theory I: 2-Bundles,} PhD thesis, University of
  California-Riverside (2006)
[{\tt \href{http://www.arxiv.org/abs/math.CT/0410328}{math.CT/0410328}}].

\bibitem{Baez:0307200}
J.~C.~Baez and A.~D.~Lauda,
{\em Higher-dimensional algebra V: 2-groups,}
\href{http://www.kurims.kyoto-u.ac.jp/EMIS/journals/TAC/volumes/12/14/12-14.pdf}{Th.
  App. Cat. {\bf 12} (2004) 423} [{\tt
  \href{http://www.arxiv.org/abs/math.QA/0307200}{math.QA/0307200}}].

\bibitem{Martins:2009aa}
J.~F.~Martins and R.~Picken,
{\em The fundamental Gray 3-groupoid of a smooth manifold and local
  3-dimensional holonomy based on a 2-crossed module,}
\href{http://dx.doi.org/10.1016/j.difgeo.2010.10.002}{Diff. Geom. App. {\bf 29}
  (2011) 179} [{\tt \href{http://www.arxiv.org/abs/0907.2566}{0907.2566
  [math.CT]}}].

\bibitem{Jurco:2009px}
B.~Jur\v{c}o,
{\em {Nonabelian bundle 2-gerbes},}
\href{http://dx.doi.org/10.1142/S0219887811004963}{Int. J. Geom. Meth. Mod.
  Phys. {\bf 08} (2011)~49} [{\tt
  \href{http://www.arxiv.org/abs/0911.1552}{0911.1552 [math.DG]}}].
%%CITATION = ARXIV:0911.1552;%%

\bibitem{Conduche:1984:155}
D.~Conduch{\'e},
{\em Modules crois{\'e}s g{\'e}n{\'e}ralis{\'e}s de longueur 2,}
\href{http://dx.doi.org/10.1016/0022-4049(84)90034-3}{J. Pure Appl. Algebra
  {\bf 34} (1984) 155}.

\bibitem{Schreiber:0802.0663}
U.~Schreiber and K.~Waldorf,
{\em Smooth functors vs. differential forms,}
\href{http://dx.doi.org/10.4310/HHA.2011.v13.n1.a7}{Homology, Homotopy Appl.
  {\bf 13} (2011) 143} [{\tt
  \href{http://www.arxiv.org/abs/0802.0663}{0802.0663 [math.DG]}}].

\bibitem{Ward:1977ta}
R.~S.~Ward,
{\em {On selfdual gauge fields},}
\href{http://dx.doi.org/10.1016/0375-9601(77)90842-8}{Phys. Lett. A {\bf 61}
  (1977)~81}.
%%CITATION = PHLTA,A61,81;%%

\bibitem{Stasheff:1963aa}
J.~D.~Stasheff,
{\em On the homotopy associativity of H-spaces, I.,}
\href{http://dx.doi.org/10.1090/S0002-9947-1963-99939-9}{Trans. Amer. Math.
  Soc. {\bf 108} (1963) 275}.

\bibitem{Stasheff:1963ab}
J.~D.~Stasheff,
{\em On the homotopy associativity of H-spaces, II.,}
\href{http://dx.doi.org/10.1090/S0002-9947-1963-0158400-5}{Trans. Amer. Math.
  Soc. {\bf 108} (1963) 293}.

\bibitem{Stasheff:1992bb}
J.~Stasheff,
{\em Differential graded Lie algebras, quasi-Hopf algebras and higher homotopy
  algebras,}
Quantum groups (Leningrad, 1990), Lecture Notes in Math., vol. 1510, Springer,
  Berlin, 1992, pp. 120–137.

\bibitem{Lada:1992wc}
T.~Lada and J.~Stasheff,
{\em {Introduction to sh Lie algebras for physicists},}
\href{http://dx.doi.org/10.1007/BF00671791}{Int. J. Theor. Phys. {\bf 32}
  (1993) 1087} [{\tt
  \href{http://www.arxiv.org/abs/hep-th/9209099}{hep-th/9209099}}].
%%CITATION = HEP-TH/9209099;%%

\bibitem{0821843621}
M.~Markl, S.~Shnider, and J.~Stasheff,
{\em Operads in algebra, topology and physics,}
Mathematical Surveys and Monographs, American Mathematical Society, 2002.

\end{thebibliography}

\bibliographystyle{latexeu}

\end{document}